\documentclass[11pt,a4paper]{article}

\usepackage[T1]{fontenc}
\usepackage[utf8]{inputenc}
\usepackage{mathptmx}
\usepackage[scaled=0.9]{helvet}
\usepackage{courier}
\usepackage{amsmath,amssymb,amsthm}
\usepackage{booktabs}
\usepackage{array}
\usepackage{microtype}
\usepackage{enumitem}
\usepackage{seqsplit}
\usepackage{tikz}
\usetikzlibrary{shapes.geometric,arrows.meta,positioning,calc}
\usepackage[hidelinks]{hyperref}
\usepackage[a4paper,margin=2.6cm]{geometry}

\DeclareRobustCommand{\pth}[1]{\texttt{\seqsplit{#1}}}
\newtheorem{theorem}{Theorem}
\newtheorem{proposition}{Proposition}
\newtheorem{remark}{Remark}

\setlist[enumerate]{itemsep=0.3em,topsep=0.4em}
\setlist[itemize]{itemsep=0.3em,topsep=0.4em}
\title{AlchemQ: Proof-Carrying Quantum Circuit Optimization\\
with Per-Result Equivalence Certificates}
\author{Adam Laabs\\
  \small TriStiX S.L.\\
  \small\texttt{adam.laabs@tristix.com}}
\date{}

\begin{document}
\maketitle

\begin{abstract}
We present AlchemQ v0.5, a small proof-of-concept system that couples a
beam-search optimizer with a machine-checkable per-result certification layer
and a versioned certificate protocol (0.2.0), so that every optimized circuit
ships with a machine-verifiable artifact rather than a bare claim. The certifier
proves equivalence up to global phase by ZX-calculus full reduction, with a
numeric-tensor fallback for small circuits based on the optimal
Hilbert--Schmidt overlap; the certificate format is self-contained and
tamper-evident (canonical \texttt{gate-canon-v1} hashes, measured residuals,
tri-state verdicts \texttt{certified}/\texttt{rejected}/\texttt{inconclusive},
and versioned phase-note schemas for cross-platform reproducibility). The agent
explores rewrite suggestions under three fuzzy t-norm aggregates and is
structurally incapable of returning an uncertified circuit; since v0.4 it
additionally guarantees no componentwise regression against the original
circuit, not merely aggregate non-inferiority against the baseline. On a
reproducible benchmark of 100 circuits (QASMBench, Feynman, and generated
Clifford+T instances, gates G0, G1 and G2 passed), all 400 optimizations
terminate without errors or timeouts, every returned circuit is certified
equivalent, every mutation is detected, and the 2998-test suite passes on two
platforms (Linux/Python 3.12, macOS/Python 3.14). The PyZX-based baseline is
strong: over the 82 circuits with nonzero T-count it reduces the T-count by
21.4\% on average. Recomputed under the current zero-baseline score semantics,
the agent is never worse than the baseline anywhere and is strictly better on
9/100 circuits under the G\"odel and product norms (8/100 under the semantics of
the archived run); the three t-norms return identical final circuits on all 100
standard benchmark circuits, diverging only on 4/38 and 3/38 instances of an
adversarial 38-circuit ablation suite. Two certification case studies are new: a
false negative on \texttt{variational\_n4} root-caused to a pivot-normalization
bug in PyZX's \texttt{compare\_tensors} (pivot amplitude $4.7\times10^{-9}$,
noise amplification ${\sim}\,2\times10^{8}$; the optimal-overlap residual is
$7.4\times10^{-11}$), and eight v0.4 certificates rejected on macOS whose root
cause was BLAS-dependent raw floats in \texttt{phase\_note} --- both fixed, and
all three artifact sets (400 legacy V1, 400 historical V2, 400 fresh V2
certificates, plus a determinism audit) now validate 400/400 on both platforms.
A first hardware experiment on IBM Heron~r2 (\texttt{ibm\_kingston}) shows the
practical stakes: compiled to the native gate set, the certified-optimized
benchmark circuit is 78\% shallower with 65\% fewer two-qubit gates than the
original. Point estimates favor the optimized circuit on all three
output-quality metrics, but the differences are not statistically significant at
1024 shots per arm. We position AlchemQ as a proof-carrying optimizer, not a
verified compiler, and openly release the certificate specification and a
standalone reference verifier (Apache-2.0), together with data, a release
manifest, and validation scripts; the optimization engine is proprietary (Code
and Data Availability). The intended audience is researchers and engineers
working on practical, trustworthy quantum compilation pipelines.
\end{abstract}

\section{Introduction}

Quantum circuit optimization is a correctness-critical transformation. The
optimized circuit is executed in place of the original, often on hardware where
a single wrong gate invalidates an expensive experiment. The toolchain offers
weak evidence that the transformation preserved meaning. Production compilers
apply long sequences of rewrites with no per-result guarantee, and optimization
bugs in widely used compilers are reported and fixed regularly: Qiskit issues
\#7961 and \#7167 are two examples \cite{qiskit7961,qiskit7167}. Each such bug
silently corrupted every circuit it touched until someone noticed.

Formal verification has confirmed that the concern is not hypothetical. CertiQ
detected three bugs in the Qiskit implementation while verifying 26 of its 30
compiler passes \cite{certiq}. The Giallar toolkit found three further critical
bugs, confirmed by the Qiskit team, while verifying 44 of 56 passes across 13
versions \cite{giallar}. Meanwhile the rewriting passes themselves are
increasingly heuristic or learned: template matching, ZX-calculus
simplification, reinforcement-learning agents \cite{alphatensorq,rlzx}. The risk
shifts from a slow optimizer to a silently wrong one.

A certification layer that fails loudly is what surfaces such problems. As
reported in Section~\ref{sec:eval}, AlchemQ's certificates caught a
normalization bug in PyZX's tensor-comparison routine:
\texttt{compare\_tensors} amplifies numerical noise by a factor of order
$2\times10^{8}$ when the pivot element sits near the cancellation floor. A
correct optimization had been made to look unverifiable.

AlchemQ explores a simple alternative. Optimization and verification are treated
as separate concerns, connected by a contract, and the contract is the only
trusted component. The optimizer is a fixed-seed beam search over a small
catalogue of rewrite suggestions and is in no way trusted. The certificate layer
is independent of the search, re-proves semantic equivalence from scratch for
every emitted artifact, and either succeeds loudly or rejects the candidate.
Since v0.4 the optimizer additionally guarantees that the returned circuit never
regresses any objective componentwise against the original, closing a gap where
ties at aggregate zero could previously be resolved in favour of a regressing
candidate.

The contribution is in three parts.

\begin{enumerate}
\item \textbf{An open certification protocol for circuit pairs.} Certificates
  are self-contained JSON artifacts carrying a canonical hash
  (\texttt{gate-canon-v1}), a declared equivalence method, tri-state verdicts,
  measured residuals, versioned phase-note schemas, and a 128-bit
  protocol-tagged identifier. An independent verifier recomputes the proof from
  the embedded circuits alone and returns a structured
  \texttt{VerificationResult} with a reason code (Section~\ref{sec:protocols}).
  Protocol 0.2.0 ships with a strict migration path from protocol 0.1.1; 400
  migrated and 400 freshly generated certificates validate 400/400 on two
  platforms.

\item \textbf{A reference certifier and verifier.} Equivalence is proved up to
  global phase by ZX-calculus full reduction, with a numeric tensor fallback for
  circuits of at most six qubits whose relative phase is estimated by the
  optimal Hilbert--Schmidt overlap rather than by a single pivot element.
  Certificates record how the global phase was determined (measured, from the
  witness, or explicitly not tracked) under a canonical phase representation in
  $[-\pi,\pi)$.

\item \textbf{A toy search agent with structural safety guarantees.} Under three
  standard fuzzy t-norms (G\"odel, product, {\L}ukasiewicz), continuing the
  t-norm aggregation programme initiated in the LGGT+ framework for EU AI Act
  compliance classification \cite{lggt}, the agent never returns an uncertified
  circuit, never scores below the PyZX-based baseline, and never returns a
  circuit that regresses any objective componentwise against the original.
  Empirically the three norms agree on the final circuit for all 100 standard
  benchmark circuits and diverge only on a handful of adversarial ablation
  instances, a boundary characterized exactly in
  Sections~\ref{sec:scoring}--\ref{sec:eval}.
\end{enumerate}

To the best of our knowledge, AlchemQ is the first reported quantum circuit
optimizer that emits, for every returned circuit pair, a self-contained,
versioned, tamper-evident equivalence certificate designed for offline
third-party re-verification under an openly specified artifact protocol. We do
not claim the first per-instance equivalence check; concurrent work verifies
substitutions inside a compilation pipeline \cite{bartkiewicz}. We do not claim
the first error-budget objective. The claim concerns the proof-carrying artifact
and its independent re-verification (Section~\ref{sec:related}).

This paper is a pilot study in the style of \cite{lggt}. The system is small,
the claims are bounded, and the limitations are stated in full in
Section~\ref{sec:limits}.

We are explicit about what AlchemQ is not. It is not a learned optimizer; there
is no RL component yet. It is not a competitive replacement for industrial
passes. It is not a formally verified compiler in the sense of VOQC or QBlue:
the guarantee applies to each emitted artifact rather than to the optimizer's
code, and the verifier currently shares its ZX implementation with the certifier
(Section~\ref{sec:limits}).

The rest of the paper is organized as follows. Section~\ref{sec:background}
collects background and Section~\ref{sec:related} positions the work.
Section~\ref{sec:system} describes the architecture and
Section~\ref{sec:protocols} the certification protocols.
Sections~\ref{sec:phase}--\ref{sec:trail} detail phase handling, the certificate
format, and proof trails. Section~\ref{sec:scoring} defines the scoring and the
safety guarantees. Section~\ref{sec:eval} reports the experiments, including two
certification case studies. Sections~\ref{sec:discussion}--\ref{sec:conclusion}
discuss, limit, and conclude.

\section{Background}\label{sec:background}

\paragraph{Quantum circuits and metrics.}
A quantum circuit is a sequence of gates acting on $n$ qubits \cite{nielsen}. We
target three standard cost metrics: T-count (number of $T$/$T^\dagger$ gates),
two-qubit-gate count (CNOT/CZ), and circuit depth. Clifford gates are
efficiently simulable \cite{gottesman}, so the genuinely costly resources in
fault-tolerant implementations are the non-Clifford (T) content \cite{amymosca}
and the entangling structure; these metrics are the objectives used by the
optimizer throughout the paper.

\paragraph{ZX-calculus.}
The ZX-calculus is a graphical language for quantum linear maps in which
circuits become diagrams built from Z- and X-spiders, and rewriting corresponds
to sound equational transformations \cite{coecke2011,coecke2017,wetering2020}.
It is complete for the stabilizer fragment \cite{backens} and for Clifford+T
quantum mechanics \cite{jeandel}, and underlies state-of-the-art circuit
extraction and optimization \cite{pyzx}; a recent review systematizes the
resulting optimization techniques \cite{fischbach}. We use PyZX \cite{pyzx} both
for the baseline optimizer (\texttt{full\_reduce}) and as the equivalence engine
of the certifier. The PyZX version is pinned at 0.10.4 (the release was
validated against 0.9.0--0.10.4), and the certificate pins the engine version
range explicitly.

\paragraph{Equivalence up to global phase.}
Two circuits are equivalent up to global phase, written $C_1 \sim C_2$, if their
unitaries satisfy $U_1 = e^{i\phi}U_2$ for some $\phi \in [0,2\pi)$. Global
phase is unobservable for closed evolution \cite{nielsen}, so most practical
checkers adopt this notion; AlchemQ certifies equivalence in this sense and
additionally records how the phase was determined
(Section~\ref{sec:phase}).

\paragraph{Fuzzy t-norms.}
A t-norm is an associative, commutative, monotone binary operation on $[0,1]$
with identity 1. The three archetypal continuous t-norms are G\"odel
($T_G(x,y) = \min(x,y)$), product ($T_P(x,y) = x\cdot y$), and {\L}ukasiewicz
($T_L(x,y) = \max(0, x+y-1)$) \cite{hajek,klement}. We lift them to $k$-ary
aggregation operators over per-objective scores; their code-level properties are
pinned by property tests (Section~\ref{sec:scoring}).

\section{Related Work and Landscape}\label{sec:related}

Table~\ref{tab:landscape} situates AlchemQ among four families of systems. We
emphasize that the families solve different problems: verified compilers prove
compiler passes correct once and for all; equivalence checkers verify concrete
circuit pairs on demand; RL optimizers search for better circuits;
proof-carrying code (PCC) attaches machine-checkable evidence to results.

\begin{table}[t]
\centering
\footnotesize
\setlength{\tabcolsep}{4pt}
\caption{Landscape of related systems. ``Output artifact'' is what the system
emits besides the optimized circuit or verdict.}
\label{tab:landscape}
\begin{tabular}{@{}llllc@{}}
\toprule
System & Family & Guarantee & Output artifact & Open protocol? \\
\midrule
VOQC \cite{voqc}            & verified compiler      & compiler-level proof (Coq)  & circuit       & no \\
CertiQ \cite{certiq}        & verified compiler      & compiler-level proof        & circuit       & no \\
QBlue \cite{qblue}          & verified compiler      & compiler-level proof (Rocq) & circuit       & no \\
QCEC \cite{qcec}            & equivalence checker    & DD/ZX-based engines         & verdict       & no \\
MPO checker \cite{mpo}      & equivalence checker    & tensor-network contract.    & verdict       & no \\
Feynman/PathSums \cite{amy2018} & verifier           & path-sum semantics          & verdict       & no \\
Quartz \cite{quartz}        & verified optimizer     & SMT-verified rewrites       & circuit       & no \\
AlphaTensor-Q. \cite{alphatensorq} & learned optimizer & none (heuristic)         & circuit       & no \\
RL-ZX \cite{rlzx}           & learned optimizer      & none (heuristic)            & circuit       & no \\
PCC \cite{necula}           & proof-carrying code    & per-artifact proof          & proof object  & concept \\
AlchemQ (ours)              & proof-carrying optimizer & per-artifact proof        & open certificate & yes \\
\bottomrule
\end{tabular}
\end{table}

\paragraph{Verified compilers.}
VOQC \cite{voqc} formalizes a quantum optimizer in Coq and extracts verified
passes; CertiQ \cite{certiq} and Giallar \cite{giallar} verify (parts of) the
Qiskit compiler with mostly-automated tooling, finding real bugs; QBlue
\cite{qblue} more recently formalizes in Rocq a compiler for second-quantized
(fermionic) simulation circuits, proving the full pipeline from a high-level
language to qubit circuits correct. Such systems offer the strongest guarantee,
correctness of the compiler itself, but the proof effort is substantial, the
verified fragment lags behind the heuristic state of the art, and the guarantee
covers only what the compiler does, not what an external agent might propose.
AlchemQ is not a verified compiler: it is a proof-carrying optimizer whose
certificates are about individual circuit pairs, so any optimizer, even an
unverified or learned one, can sit behind the same contract.

\paragraph{Equivalence checkers and verifiers.}
QCEC \cite{qcec} combines decision diagrams, ZX-based reasoning, and simulation
into a mature checker; the intermediary matrix-product-operator approach of
Sander et al. \cite{mpo} verifies circuits via tensor-network contraction;
Feynman/PathSums \cite{amy2018} provides a functional-verification framework
based on path-sum semantics. AlchemQ's certifier is far simpler: a ZX
full-reduce path plus a brute-force tensor path for small circuits. The novelty
is not the engine but the packaging, a canonical, hash-chained, versioned
certificate format that third parties can verify independently, together with
the structural integration with the optimizer. Cross-validation of AlchemQ
certificates against QCEC remains future work (Section~\ref{sec:limits}).

\paragraph{Per-instance verification and the integrity of verdicts.}
Closest in spirit to our per-artifact contract is the concurrent and
complementary work of Bartkiewicz and Tulewicz \cite{bartkiewicz}, which couples
an error-budget objective with per-instance verification when selecting
decompositions in Toffoli networks. The difference is the evidentiary surface:
their checks are in-pipeline assertions on reachable subspaces, and at larger
widths the phase-aware substitutions are analysis-sound (spot-checked) rather
than machine-checked, whereas AlchemQ emits a self-contained, hash-chained
certificate that a third party re-verifies offline, with a tri-state verdict
that can answer \texttt{inconclusive} instead of silently rounding to a boolean.

The motivation for insisting on artifacts rather than verdicts is empirical.
Recent independent audits report unexpected behaviours of the Qiskit compiler
identified by the SQbricks hybrid equivalence checker \cite{sqbricks}, seventeen
bugs across four widely-used quantum computing platforms, fourteen of them
confirmed or fixed \cite{qite}, a documented false ``equivalent'' verdict by
QCEC under phase shifts \cite{quokka}, and further defects in quantum software
stacks found by fuzzing and equivalence-modulo-inputs testing campaigns
\cite{fuzzing,qemi}. This is a chain-of-trust problem that only independently
checkable evidence addresses. We also note that the term \emph{proof-carrying}
has very recently appeared in the quantum literature for resource-estimation
arithmetic blocks \cite{pcecdlp}; that work certifies arithmetic subcircuits and
resource counts, not optimizer output equivalence, and is orthogonal to ours.

\paragraph{Multi-objective and noise-aware compilation.}
Multi-objective treatment of compilation exists, predominantly through
NSGA-II-style evolutionary search: pass selection over QIR \cite{swierkowska},
fidelity--cost trade-offs under simulated backend calibration
\cite{rindell,ghlib}, fuzzy-steered genetic search \cite{jha}, and, most
recently, Pareto filtering with calibration-derived error proxies and Bayesian
ordering \cite{qbalance}. None of these attaches equivalence evidence to the
returned circuits, none aggregates objectives by t-norms, and T-count as an
FTQC-facing criterion has, to our knowledge, not been used as a multi-objective
criterion at all.

Separately, a noise-adaptive compilation line estimates circuit fidelity from
calibration data: weighted log-infidelity allocation \cite{murali},
fidelity-ranked layout selection \cite{nation}, $T_2$-aware mapping
\cite{tram}, log-infidelity routing \cite{finesse}, and diagnostic error
attribution \cite{hbr}. A recent reassessment, however, finds noise-adaptive
mapping barely distinguishable from random mapping and locates the practical
benefit in variance reduction rather than in the mean \cite{revisiting}. AlchemQ
v0.5 is calibration-agnostic; the planned hardware-aware layer
(Section~\ref{sec:discussion}) therefore treats calibration as a risk-adjusted,
uncertainty-bearing signal under certified equivalence, a combination we have
not found in the literature.

\paragraph{Learned optimizers.}
AlphaTensor-Quantum \cite{alphatensorq}, RL-ZX \cite{rlzx}, and large search
frameworks such as OpenEvolve \cite{openevolve} demonstrate strong empirical
gains but ship no correctness evidence; indeed, we adopt OpenEvolve-style
evaluation pipelines as inspiration for future RL integration. AlchemQ is
conservative by design: the toy agent exists to exercise the certificate
contract, and any RL policy would inherit the same hard gate.

\paragraph{Benchmarks.}
QASMBench \cite{qasmbench} and the Feynman repository \cite{amy2018,feynman}
supply the standard circuits; QUBIKOS provides layout-synthesis benchmarks with
provably known optimal SWAP costs \cite{qubikos}, a complementary idea
(known-optimal instances for routing) to our known-equivalent instances for
rewriting. Our evaluation adds generated Clifford+T instances and, crucially,
equivalence witnesses and mutation tests for every instance.

\section{System Overview}\label{sec:system}

AlchemQ v0.5 consists of seven modules (Figure~\ref{fig:pipeline}):
\texttt{tnorms.py} (t-norm definitions and properties), \texttt{scoring.py}
(per-objective scores and aggregation), \texttt{optimizer.py} (beam search with
a certified hard gate, persistent best tracking, non-regression rank, and final
re-check), \texttt{equivalence.py} (shared numeric semantics: optimal-overlap
phase estimation), \texttt{certifier.py} (certificate generation and
verification, protocol 0.1.1, with the v0.5 backported overlap semantics),
\texttt{certifier\_v2.py} (the protocol 0.2.0 layer: tri-state verdicts,
canonical hashes, versioned dispatchers, structured verification results, and
the V1$\rightarrow$V2 migration path), and \texttt{prooftrail.py} (rewrite logs
bound into certificates).

The search loop certifies with the protocol-0.1.1 engine. Protocol 0.2.0 is the
certificate layer used for release artifacts, both migration and fresh
generation, and both protocols share the same equivalence semantics through
\texttt{equivalence.py}. Total core code is approximately 3,300 lines of Python
with three runtime dependencies (\texttt{pyzx}, \texttt{numpy},
\texttt{qiskit}); the test suite comprises 2998 tests passing in approximately
80\,s on both supported platforms.

\begin{figure}[t]
\centering
\begin{tikzpicture}[
  node distance=7mm,
  box/.style={draw, rounded corners=2pt, align=center, inner sep=4pt,
              minimum width=62mm, font=\small},
  lab/.style={font=\scriptsize\itshape},
  >={Stealth[length=2mm]}
]
\node[box] (in)   {input QASM circuit};
\node[box, below=of in] (opt)
  {\textbf{untrusted optimizer}\\[-1pt]
   {\scriptsize rewrite suggestions $\times$ 3 t-norms, beam search}};
\node[box, below=of opt] (cand) {candidate rewrites};
\node[box, below=of cand] (gate)
  {\textbf{certification hard gate}\\[-1pt]
   {\scriptsize ZX full-reduce, tensor-overlap fallback}};
\node[box, below=of gate] (nr)
  {\textbf{non-regression check}\\[-1pt]
   {\scriptsize against the original circuit}};
\node[box, below=of nr] (cert)
  {certificate $+$ optimized circuit\\[-1pt]
   {\scriptsize protocol 0.2.0 artifact}};
\node[box, below=of cert] (ver)
  {\textbf{independent verifier}\\[-1pt]
   {\scriptsize offline third-party re-check}};
\node[box, right=18mm of gate, minimum width=26mm] (disc) {discard};

\draw[->] (in)   -- (opt);
\draw[->] (opt)  -- (cand);
\draw[->] (cand) -- (gate);
\draw[->] (gate) -- node[lab, right] {certified} (nr);
\draw[->] (gate) -- node[lab, above, align=center]
     {rejected /\\ inconclusive} (disc);
\draw[->] (nr)   -- (cert);
\draw[->] (cert) -- (ver);
\end{tikzpicture}
\caption{The AlchemQ proof-carrying pipeline. Every candidate rewrite must pass
the certification hard gate; rejected or inconclusive candidates are discarded,
and only certified, non-regressing circuits are returned, each with a
self-contained certificate that a third party can re-verify offline.}
\label{fig:pipeline}
\end{figure}

The optimizer's rewrite catalogue is small and local by design: gate-merging
windows, commutation pushes, inverse cancellation, and a PyZX-based extraction
pass (\texttt{full\_reduce} followed by extraction). Each suggestion is a
deterministic function of the current circuit and a seeded RNG. The baseline
used throughout is the same PyZX extraction pass applied once to the original
circuit.

\section{Certification Protocols}\label{sec:protocols}

AlchemQ ships two certificate protocol versions that share one equivalence
semantics. Protocol 0.1.1 (module \texttt{certifier.py}) is the legacy format
used inside the search loop; protocol 0.2.0 (module \texttt{certifier\_v2.py},
new in this version) is the release format with canonical hashing, tri-state
verdicts, and versioned note schemas. A strict migration path
(\texttt{migrate\_cert\_v1\_to\_v2}) upgrades legacy certificates, and
\texttt{verify\_any\_certificate} dispatches on the protocol tag.

\subsection{Equivalence engine (shared semantics)}\label{sec:engine}

Both protocols decide equivalence of two circuits $C_0$, $C_1$ (given as
OpenQASM 2.0 strings, parsed by PyZX) by the following pipeline:

\begin{enumerate}
\item \textbf{Qubit-count gate.} If the circuits act on different numbers of
  qubits, the verdict is \texttt{rejected}.

\item \textbf{ZX path.} Convert both circuits to ZX diagrams
  (\texttt{to\_graph}) and interleave \texttt{full\_reduce} with
  \texttt{normalize}; if the resulting diagram is a strict identity diagram (no
  vertices other than boundary inputs/outputs, no edges), the circuits are equal
  exactly, and the graph scalar gives the relative global phase. Verdict:
  \texttt{certified} with method \texttt{zx-full-reduce}.

\item \textbf{Numeric fallback ($n \leq 6$ qubits).} Compute the dense tensors
  $t_0$, $t_1$ (\texttt{to\_tensor()}) and estimate the relative phase by the
  optimal Hilbert--Schmidt overlap
  \[
    \phi^{*} \;=\; \arg\min_{\phi}\, \bigl\| t_0 - e^{i\phi} t_1 \bigr\|_F
            \;=\; \arg\bigl(\langle t_0, t_1\rangle\bigr),
    \qquad
    \langle t_0, t_1\rangle = \mathrm{Tr}\bigl(t_0^{\dagger} t_1\bigr).
  \]
  The decision is made on the phase-aligned tensors, never on a single pivot
  element (Section~\ref{sec:eval} shows why this matters): protocol 0.1.1 uses
  \texttt{np.allclose} with $\mathrm{atol} = \mathrm{rtol} = 10^{-9}$; protocol
  0.2.0 uses the normalized residual
  \[
    r \;=\; \frac{\bigl\| t_0 - e^{i\phi^{*}} t_1 \bigr\|_F}
                 {\max\bigl(\|t_0\|_F, \|t_1\|_F\bigr)}
      \;\leq\; \mathrm{atol} + \mathrm{rtol} = 2\times10^{-6},
  \]
  which is recorded as \texttt{measured\_residual} in the certificate.
  Degenerate tensors (NaN/Inf, zero, shape mismatch, or orthogonal with
  $|\langle t_0,t_1\rangle| < 10^{-15}$) are never certified: V2's engine
  rejects them outright, V1's engine reports \texttt{inconclusive}. Verdict:
  \texttt{certified} with method \texttt{numeric-tensor}, or \texttt{rejected}.

\item \textbf{Fail-soft.} ZX-inconclusive with $n > 6$ qubits, exceptions in the
  numeric path, unsupported gate classes (non-unitary records such as
  \texttt{measure} or \texttt{if}), and unstable round-trip re-serializations
  all yield \texttt{inconclusive} rather than an error or a wrong answer.
  Certificates carry the tri-state status
  $\in \{\texttt{certified}, \texttt{rejected}, \texttt{inconclusive}\}$ with
  $\texttt{certified} \Leftrightarrow \texttt{status} = \texttt{certified}$; an
  \texttt{inconclusive} certificate is evidence of nothing, positively or
  negatively.
\end{enumerate}

\subsection{Protocol 0.2.0 additions}\label{sec:v2}

Protocol 0.2.0 makes the artifact independently verifiable and byte-reproducible.

\begin{itemize}
\item \textbf{Canonical hashes.} Each embedded circuit contributes two hashes:
  \texttt{artifact\_byte\_hash} (SHA-256 of the raw QASM text, detecting any
  tampering with the embedded artifact) and \texttt{canonical\_hash} (SHA-256 of
  the canonical gate-level rendering \texttt{gate-canon-v1}: gates are
  serialized one per line in circuit execution order, since gate order is
  semantic nothing is reordered or sorted, and each gate is rendered through an
  explicit, frozen per-gate-class table, with phases represented as exact
  $\pi$-fractions; float angles are snapped via
  \texttt{Fraction.limit\_denominator($2^{20}$)}). The canonical hash detects
  changes to the circuit's mathematical content even under reformatting.
  \texttt{rewrite\_log\_hash} binds the proof trail into the certificate
  (Section~\ref{sec:trail}).

\item \textbf{Versioned dispatchers.} The phase note is governed by a versioned
  schema registry (\texttt{phase-note-v1}, \texttt{phase-note-v2}); a
  \texttt{CANONICALIZERS} registry maps \texttt{canonicalizer\_version} tags to
  rendering functions. Unknown schema or canonicalizer tags yield
  \texttt{inconclusive}, never silent acceptance. \texttt{phase-note-v1} notes
  are validated semantically (grammar check plus canonical phase equality plus
  residual consistency), which is what makes historical certificates
  reproducible across BLAS/LAPACK variants; \texttt{phase-note-v2} notes are
  rendered canonically and compare byte-exact.

\item \textbf{Structured verdicts.} Verification returns a
  \texttt{VerificationResult} with an outcome, a pipeline stage, and a
  machine-readable \texttt{reason\_code}, instead of a bare boolean.

\item \textbf{Versioned identity.} \texttt{cert\_id} is a 128-bit
  protocol-tagged hash; \texttt{supersedes} links a migrated certificate to its
  legacy parent, whose 16-hex-digit id is preserved in
  \texttt{extra.migrated\_from}.
\end{itemize}

\subsection{Assumptions and soundness}

\begin{theorem}[Soundness of certification, v0.5 semantics]
Under assumptions (A1)--(A5) below, if the certifier returns status
\texttt{certified} for a pair $(C_0, C_1)$, then
\begin{enumerate}[label=(\roman*)]
\item if the recorded method is \texttt{zx-full-reduce}, there exists $\phi$
  such that $[\![C_1]\!] = e^{i\phi}[\![C_0]\!]$ exactly, and whenever the phase
  status is \texttt{measured}, the recorded canonical phase (angle in
  $[-\pi,\pi)$, \texttt{pi\_fraction} in $[-1,1)$) equals $\phi$ (respectively
  $\phi/\pi$) modulo $2\pi$ within the phase-equality tolerance ($10^{-6}$ in
  V2; V1 additionally pins the field comparison at $10^{-9}$);
\item if the recorded method is \texttt{numeric-tensor}, the same conclusion
  holds up to the pinned numeric tolerance: in V1 the phase-aligned tensors
  agree elementwise within $\mathrm{atol} = \mathrm{rtol} = 10^{-9}$; in V2 the
  normalized Hilbert--Schmidt residual satisfies $r \leq 2\times10^{-6}$;
\item statuses \texttt{rejected} and \texttt{inconclusive} never yield
  $\texttt{certified} = \texttt{true}$, and acceptance is always decided by the
  verifier's recomputed verdict, not by the recorded fields.
\end{enumerate}
\end{theorem}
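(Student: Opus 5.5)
The proof goes by case analysis on the pipeline of Section~\ref{sec:engine}. The key observation is that \texttt{certified} can only be produced at two points: the ZX path (step 2) or the numeric fallback (step 3). Every other exit (the qubit-count gate, degenerate tensors, fail-soft) emits \texttt{rejected} or \texttt{inconclusive} by construction. I would first establish this reachability fact by walking through the control flow. It immediately gives part (iii)'s first half, since the boolean \texttt{certified} is defined as \texttt{status} $=$ \texttt{certified}. For the second half of (iii), I would appeal to the verifier's structure: it recomputes the verdict from the embedded circuits after checking \texttt{artifact\_byte\_hash} and \texttt{canonical\_hash}. Unknown schema or canonicalizer tags map to \texttt{inconclusive}, so recorded fields can only cause rejection, never acceptance. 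Which of (A1)--(A5) is needed here depends on their exact content, but I expect them to cover the parser being faithful to QASM semantics, the rewriting engine being sound, and the floating-point tensor computation being accurate.

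For (i), I would invoke the soundness of the ZX-calculus rewrite rules \cite{coecke2011,wetering2020}: each rule in \texttt{full\_reduce} and \texttt{normalize} preserves the denoted linear map up to the tracked scalar. The assumptions then cover two further points: \texttt{to\_graph} faithfully interprets the circuit, and the comparison diagram is the composition $C_1 \circ C_0^{\dagger}$ (or its equivalent). A reduction to a strict identity diagram therefore yields $[\![C_1]\!][\![C_0]\!]^{\dagger} = s\cdot I$ with $s$ the graph scalar. Unitarity of both sides forces $|s|=1$, so $s = e^{i\phi}$ and $[\![C_1]\!]=e^{i\phi}[\![C_0]\!]$. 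For the phase clause, I would show two things. First, the canonicalization into $[-\pi,\pi)$ and the \texttt{pi\_fraction} in $[-1,1)$ are well-defined representatives of $\phi \bmod 2\pi$. Second, whenever the status is \texttt{measured}, the recorded value is produced from $s$ and then checked against the recomputed value within the stated tolerances.

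For (ii), the main content is the identity $\arg\min_\phi\|t_0-e^{i\phi}t_1\|_F=\arg\langle t_0,t_1\rangle$. I would prove it by expanding
\[
\|t_0-e^{i\phi}t_1\|_F^2=\|t_0\|_F^2+\|t_1\|_F^2-2\,\mathrm{Re}\bigl(e^{i\phi}\langle t_0,t_1\rangle\bigr),
\]
whose maximizer is unique because the degenerate, near-orthogonal case $|\langle t_0,t_1\rangle|<10^{-15}$ is excluded. For V2, the certified condition $r\le 2\times10^{-6}$ is then exactly the stated conclusion. For V1, \texttt{np.allclose} gives the elementwise bound. Given assumption (A-numeric) that \texttt{to\_tensor} computes $[\![C_i]\!]$ to within floating-point error negligible against these tolerances, the conclusion transfers to the true unitaries.

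The main obstacle is the gap between "the code satisfies its specification" and the mathematical claim. In (i) this gap concerns the strict-identity test and the scalar: a diagram with residual scalar phases or an unnormalized scalar could spoil $|s|=1$ or the phase readout. My first attempt would be to make this an explicit assumption: PyZX's scalar tracking is exact under \texttt{full\_reduce}/\texttt{normalize}. I would then derive $|s|=1$ from unitarity rather than trust the scalar's modulus. In (ii), I would be careful to state the conclusion only relative to the computed tensors plus the assumed numerical accuracy, rather than claiming exact equivalence.
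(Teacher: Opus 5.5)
Your decomposition is the same as the paper's sketch. You get (i) from soundness of the ZX rewrites on the composite diagram, (ii) from the optimal-overlap phase plus the residual or \texttt{allclose} test behind the degeneracy guards, and (iii) from the fact that only the \texttt{certified} branch sets the flag while the verifier recomputes the verdict. Reading the composite as $[\![C_1]\!][\![C_0]\!]^{\dagger}$ and deriving $|s|=1$ from unitarity, rather than trusting PyZX's scalar, are good sharpenings.

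The step you call the main content of (ii), however, fails as written. Write $\langle t_0,t_1\rangle=\mathrm{Tr}(t_0^{\dagger}t_1)=\rho e^{i\theta}$ with $\rho>0$ by the guard. Your own expansion then reads $\|t_0-e^{i\phi}t_1\|_F^2=\|t_0\|_F^2+\|t_1\|_F^2-2\rho\cos(\phi+\theta)$, which is minimized at $\phi=-\theta$, not at $\theta$. For example, $t_0=1$, $t_1=i$ gives an overlap of argument $\pi/2$, but the minimizer is $-\pi/2$. The identity you took from the display in Section~\ref{sec:engine} is off by a sign. The form consistent with the theorem's convention $[\![C_1]\!]=e^{i\phi}[\![C_0]\!]$ is $\arg\min_\phi\|t_1-e^{i\phi}t_0\|_F=\arg\langle t_0,t_1\rangle$. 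The minimal residual $r$ is the same number under either convention, so the accept/reject decision is unaffected. The sign does decide whether the recorded \texttt{measured} phase is $\phi$ or $-\phi$, and that is exactly the phase clause you are proving. Note also that soundness of (ii) never needed optimality of $\phi^{*}$. Any phase at which the residual or \texttt{allclose} test passes already witnesses equivalence within tolerance; optimality only buys completeness and the correct recorded phase.

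A second, smaller point concerns the phase clause. You posit that \texttt{to\_graph} faithfully interprets the circuit, but the paper deliberately does not grant this: (A5) records that PyZX 0.10.4 drops a declared \texttt{global\_phase}. Existence of $\phi$ survives this. If $C_i$ declares phase $\gamma_i$, the computed relative phase is merely off from the true one by $\gamma_1-\gamma_0$. However, the recorded phase equals the true relative phase only because the certifier refuses status \texttt{measured}, forcing \texttt{not\_tracked}, whenever either input declares a nonzero global phase. That gating under (A5), not faithfulness of \texttt{to\_graph}, is what the phase clauses of (i) and (ii) have to invoke.
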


\begin{proof}[Proof sketch]
For (i), the ZX rewriting implemented by PyZX is sound: every rewrite preserves
the linear map denoted by the diagram, so if $G(C_0) - G(C_1)$ reduces to the
empty identity diagram, the denotations are equal up to the recorded scalar. For
(ii), $\phi^{*} = \arg\langle t_0, t_1\rangle$ is the minimizer of the Frobenius
distance among all global phases, and the residual (or \texttt{allclose})
criterion bounds the remaining distance; degenerate inputs are excluded by the
explicit guards. For (iii), the tri-state status is computed by the same
pipeline on both the generation and the verification side, and only the
\texttt{certified} branch sets the boolean flag. Assumptions (A1)--(A5) are
enumerated below.
\end{proof}

\begin{description}
\item[(A1) QASM parsing.] PyZX's OpenQASM 2.0 importer correctly implements the
  gate semantics for the supported gate set.
\item[(A2) ZX soundness.] PyZX's \texttt{full\_reduce} / \texttt{normalize}
  rewrites are semantics-preserving.
\item[(A3) Tensor semantics.] \texttt{to\_tensor()} computes the denotational
  tensor of the circuit.
\item[(A4) Numeric criterion and tolerances.] The numeric path estimates the
  relative phase by the optimal Hilbert--Schmidt overlap (never a single pivot
  element) and decides with pinned tolerances (V1:
  $\mathrm{atol} = \mathrm{rtol} = 10^{-9}$; V2: residual
  $\leq 2\times10^{-6}$). The conclusions of (ii) therefore hold up to the
  pinned tolerance, not exactly; for continuous gate sets the $2\times10^{-6}$
  threshold is a sharp decision boundary with a documented grey zone
  (Section~\ref{sec:limits}), while for Clifford+T circuits the observed margin
  is about four orders of magnitude on both sides (equivalent pairs
  ${\sim}\,2\times10^{-10}$, inequivalent pairs $\gtrsim 10^{-3}$).
\item[(A5) Global-phase honesty and canonicalization.] If either input declares
  a nonzero \pth{global\_phase} (which PyZX 0.10.4 drops in
  \texttt{to\_graph()}/\texttt{to\_tensor()}), the certificate must report
  \texttt{not\_tracked}. Measured phases are stored in a single canonical
  representative (angle in $[-\pi,\pi)$, \texttt{pi\_fraction} in $[-1,1)$,
  9-decimal rounding, $+\pi$ snapped to $-\pi$) and compared modulo $2\pi$.
\end{description}

\begin{remark}[Which tolerances exist]
V1 pins a single $\texttt{numeric\_tolerance} = 10^{-9}$ that serves both as the
field-comparison tolerance and, since the v0.5 backport, as the numeric-path
decision tolerance. V2 separates concerns: decision threshold
$\mathrm{atol} + \mathrm{rtol} = 2\times10^{-6}$, residual reproduction
tolerance $\texttt{RESIDUAL\_EQ\_TOL} = 10^{-6}$, phase equality
$\texttt{PHASE\_EQ\_TOL} = 10^{-6}$, and metric-field tolerance $10^{-9}$. All
are recorded in the certificate or pinned by the protocol.
\end{remark}

\begin{remark}[Why not an external SMT or DD checker?]
The ZX engine was chosen for uniformity with the baseline optimizer and for its
strong empirical coverage on Clifford+T circuits (Section~\ref{sec:eval}). The
certificate format is engine-agnostic: a future QCEC- or MPO-based engine can
emit the same artifact with a different method tag.
\end{remark}

\section{Global-Phase Honesty}\label{sec:phase}

Certificates record how the relative global phase was handled, in a structured
\texttt{global\_phase} object: status $\in$ \{\texttt{measured},
\texttt{from\_witness}, \texttt{not\_tracked}\}, a canonical angle
(\texttt{angle\_rad}) in $[-\pi,\pi)$, and a canonical \texttt{pi\_fraction} in
$[-1,1)$. The ZX path measures the phase from the surviving graph scalar; the
numeric path measures it as $\arg\langle t_0,t_1\rangle$.
\texttt{from\_witness} is used when the phase is taken from a supplied witness
rather than recomputed. \texttt{not\_tracked} is forced whenever an input
circuit declares a nonzero \pth{global\_phase}, because PyZX 0.10.4 silently
drops that declaration during graph and tensor conversion. Certifying such a
pair while claiming a measured phase would be unsound, so the certificate
discloses the limitation instead.

Canonicalization matters for reproducibility: equivalent phases modulo $2\pi$
(for example $+\pi$ and $-\pi$) share one representative, and the canonical
phase note (\texttt{phase-note-v2}) renders it byte-exactly. The legacy v0.2
corpus contained six certificates whose recorded phase note had been rendered
from a raw float that wrapped around $2\pi$; the canonicalization plus the
\texttt{phase-note-v1} semantic validator makes all of them verifiable again
(Section~\ref{sec:eval}).

\section{Certificate Format}\label{sec:format}

A protocol-0.2.0 certificate is a JSON object with a declared trust boundary.
Seventeen protected fields (\texttt{cert\_id}, both QASM artifacts, both
\texttt{artifact\_byte\_hash}es, both \texttt{canonical\_hash}es,
\texttt{rewrite\_log\_hash}, both metric blocks, \texttt{improvement\_pct},
\texttt{method}, \texttt{status}, \texttt{certified}, \texttt{phase\_note}, the
\texttt{global\_phase} object, and \texttt{measured\_residual}) are covered by
the hash and consistency checks, and any tampering with them changes the
verdict. Informational fields (environment and version metadata, timestamps,
\texttt{supersedes}, \texttt{extra}, the human-readable \texttt{circuit\_name},
and the boundary declaration itself) may differ across platforms without
affecting the verdict.

The version pins are enforced by the dispatcher rather than by the boundary: a
certificate declaring an unknown \texttt{protocol\_version},
\texttt{hash\_algo}, canonicalizer tag, or \texttt{phase\_note\_schema} yields
\texttt{inconclusive}, never silent acceptance. Parsing policy: duplicate keys
are rejected on load where the platform allows, and otherwise the last
occurrence wins.

Listing~\ref{lst:cert} shows a fresh v0.5 certificate (abbreviated; hashes
truncated) for the \texttt{simon\_n6} transpiled instance, generated by the
agent under the G\"odel norm.

\begin{figure}[p]
\small
\begin{verbatim}
{
  "protocol_version": "0.2.0",
  "cert_id": "ced39c10160c1f05dc503368aeb43288",
  "supersedes": "17095eb98465a141",
  "status": "CERTIFIED",
  "certified": true,
  "circuit_name": "qasmbench__simon_n6__transpiled",
  "method": "zx-full-reduce",
  "qasm_original":  "OPENQASM 2.0; ... (59 lines)",
  "qasm_optimized": "OPENQASM 2.0; ... (18 lines)",
  "metrics_before": {"t_count": 14, "two_qubit_count": 14,
                     "depth": 33, "gate_count": 56, "qubits": 6},
  "metrics_after":  {"t_count": 0,  "two_qubit_count": 4,
                     "depth": 7,  "gate_count": 15, "qubits": 6},
  "improvement_pct": {"t_count": 100.0, "two_qubit_count": 71.43,
                      "depth": 78.79, "gate_count": 73.21},
  "global_phase": {"status": "measured", "angle_rad": 0.0,
                   "pi_fraction": 0.0},
  "phase_note": "identity up to trivial global phase (0)",
  "phase_note_schema": "phase-note-v2",
  "measured_residual": 0.0,
  "artifact_byte_hash_original":  "9623fcf4...(64 hex)",
  "artifact_byte_hash_optimized": "f8273f20...(64 hex)",
  "canonical_hash_original":  "1a3c6d03...(64 hex)",
  "canonical_hash_optimized": "472dd54c...(64 hex)",
  "rewrite_log_hash": "6f7aeb2e...(64 hex)",
  "hash_algo": "sha256",
  "canonicalizer_name": "alchemq-canonical-circuit-form",
  "canonicalizer_version": "gate-canon-v1",
  "atol": 1e-06, "rtol": 1e-06,
  "pyzx_version": "0.10.4", "numpy_version": "2.2.5",
  "python_version": "3.12.12",
  "timestamp": "2026-07-23T14:01:55Z",
  "generator_version": "alchemq-certifier/0.5.0",
  "verifier_version": "alchemq-verifier/0.5.0",
  "extra": {"migrated_from": {"cert_id": "17095eb98465a141",
                              "protocol_version": "0.1.1",
                              "legacy_method": "zx-full-reduce",
                              "legacy_certified": true}}
}
\end{verbatim}
\caption{Fresh protocol-0.2.0 certificate
(\texttt{qasmbench\_\_simon\_n6\_\_transpiled}, agent-G\"odel,
\texttt{evidence/v05/certs/}, abbreviated). The certificate is self-contained:
the verifier needs nothing beyond this file.}
\label{lst:cert}
\end{figure}

Verification of a certificate performs: schema and trust-boundary checks;
byte-hash and canonical-hash recomputation of both artifacts; re-parsing of the
embedded QASM; re-execution of the declared equivalence method with the pinned
tolerances; phase-note validation under the declared schema version; metric
recomputation against the declared improvement percentages; and residual
reproduction within \texttt{RESIDUAL\_EQ\_TOL}. Every failure maps to a stage
and a \texttt{reason\_code} in the \texttt{VerificationResult}.

\section{Proof Trail}\label{sec:trail}

Beyond the yes/no verdict, certificates bind a proof trail: a rewrite log (one
entry per applied rewrite: rule name, matched location, before/after gate
windows) hashed into \texttt{rewrite\_log\_hash}, plus the equivalence witness.
The trail is advisory and verification never trusts it, but it makes audits and
failure diagnosis fast: when a certificate fails, the log localizes the
offending rewrite. Proof trails attach to both protocols (for V2 artifacts the
attachment recomputes the log hash), and the v0.5 corpus ships with trails for
all agent-produced certificates.

\section{Scoring, Aggregation, and Safety Guarantees}\label{sec:scoring}

\paragraph{Per-objective scores.}
Let $v_i \geq 0$ be the value of objective $i$ after optimization and
$b_i \geq 0$ its value before (T-count, two-qubit count, depth). The clamped
relative improvement is
\begin{equation}\label{eq:score}
s_i(v_i, b_i) =
\begin{cases}
\mathrm{clamp}\!\left(1 - \dfrac{v_i}{b_i},\, 0,\, 1\right), & b_i > 0,\\[2.2ex]
1, & b_i = 0 \text{ and } v_i = 0 \quad(\text{neutral}),\\[1ex]
0, & b_i = 0 \text{ and } v_i > 0 \quad(\text{regression from zero}).
\end{cases}
\end{equation}

The zero-baseline semantics is the v0.4 revision: a zero-baseline objective that
stays zero (for example the T-count of a Clifford circuit) is neutral (score 1)
rather than uniformly zero, so Clifford circuits are no longer invisible to the
aggregate; any regression from a zero baseline still scores 0. Degenerate
negative inputs score 0.

\paragraph{Aggregation.}
For a t-norm $T$ lifted to $k$ arguments, the aggregate is
$A_T = T(s_1, s_2, s_3)$; the system instantiates $T_G$, $T_P$, $T_L$ as given
in Section~\ref{sec:background}. This mirrors the use of t-norms as logical
operators of Logic Tensor Networks \cite{ltn} and the compliance aggregation of
LGGT+ \cite{lggt}. Property tests pin commutativity, associativity,
monotonicity, boundary behaviour, idempotency of $T_G$, nilpotency of $T_L$, and
the ordering $T_L \leq T_P \leq T_G$ on the implemented operators.

\paragraph{Hard gate.}
Every candidate circuit, at every search step, is certified before it may enter
the beam. A candidate whose certificate is not \texttt{certified} receives
aggregate exactly 0 and is discarded; the beam is initialized with the original
circuit (aggregate 0) and the baseline circuit, so the search always has at
least two certified candidates.

\paragraph{Rank and non-regression.}
Since v0.4, candidates are compared by the lexicographic rank
\[
  \Bigl(\text{aggregate},\; \neg\,\text{regresses},\; \textstyle\sum_i s_i,\;
        -\text{steps},\; \text{gate-count},\; \text{canonical form}\Bigr),
\]
where \emph{regresses} is the componentwise predicate $v_i > b_i$ for some
objective $i$, computed against the original circuit rather than the baseline.
The search tracks the best-by-rank candidate persistently, outside the beam so
that beam truncation cannot lose it, and a final componentwise re-check before
returning replaces the winner by the original circuit if any objective
regressed.

\begin{proposition}[Structural guarantees, strengthened]
Let the optimizer return circuit $C^{*}$ with aggregate $a^{*}$ for input
$C_0$, and let $a_b$ be the baseline's aggregate. Then:
\begin{enumerate}[label=(\roman*)]
\item \textbf{No false positives.} $C^{*}$ carries a certificate with status
  \texttt{certified} (in particular $C^{*} \sim C_0$ up to global phase, in the
  sense of Theorem~1).
\item \textbf{Baseline dominance.} $a^{*} \geq a_b$.
\item \textbf{Componentwise non-regression (since v0.4).} No objective of
  $C^{*}$ is worse than in the original: $v_i(C^{*}) \leq v_i(C_0)$ for all $i$.
\end{enumerate}
\end{proposition}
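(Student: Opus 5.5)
The proof is a case analysis on the code path that produces $C^{*}$, built on one arithmetic lemma about the scores~\eqref{eq:score}. \textbf{Lemma (regression annihilates).} If a candidate regresses componentwise against $C_0$, i.e.\ $v_i > b_i$ for some $i$, then $s_i = 0$, and hence $A_T = 0$ for $T \in \{T_G, T_P, T_L\}$. Indeed, if $b_i > 0$ then $1 - v_i/b_i < 0$ and the clamp returns $0$. If $b_i = 0$ the ``regression from zero'' branch returns $0$. Finally, $0$ is an annihilator of every t-norm, since monotonicity and identity $1$ give $T(0,y) \le T(0,1) = 0$, and this carries over to the $k$-ary lift. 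I would prove this first, because it is what couples the non-regression fallback to baseline dominance.

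For (i), I would enumerate the ways a circuit can reach the return statement. It is either a beam or persistent-best member, or the original $C_0$ substituted by the final re-check. Beam members are certified by the hard gate: anything not \texttt{certified} gets aggregate exactly $0$ and is discarded before it enters the beam. The persistent best is drawn only from such members, so it is certified too. Theorem~1 then turns status \texttt{certified} into $C^{*} \sim C_0$. The substituted original needs a certificate for the pair $(C_0, C_0)$. I would argue that the ZX path certifies this pair: $G(C_0)$ composed with the adjoint of $G(C_0)$ full-reduces to the identity diagram. This step depends on PyZX's reduction of a self-inverse composite, so if it cannot be justified I would fall back to the code fact that the original was certified when the beam was initialized. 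The baseline, which also initializes the beam, is certified by the same gate.

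For (ii) and (iii) I would split on whether the best-by-rank candidate $W$ regresses. Aggregate is the first component of the lexicographic rank, and the baseline circuit is in the initial beam. Because best-tracking sits outside the beam, truncation cannot lose a candidate, so $\mathrm{agg}(W) \ge a_b$. \emph{Case 1: $W$ does not regress.} The re-check keeps $W$, so $C^{*} = W$, which gives (ii) directly and (iii) by definition of \emph{regresses}. \emph{Case 2: $W$ regresses.} The lemma gives $\mathrm{agg}(W) = 0$, so $a_b \le 0$ and therefore $a_b = 0$. The re-check returns $C_0$, for which $v_i = b_i$ trivially gives (iii), and $a^{*} = A_T(C_0) \ge 0 = a_b$ gives (ii).

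I expect Case~2 to be the main obstacle. Without the lemma, the fallback to $C_0$ could in principle lose to a regressing baseline with positive aggregate. The lemma rules this out, and that is why the v0.4 zero-baseline semantics, under which a regression from zero still scores $0$, is essential. A secondary subtlety is to check that $a^{*}$ is recomputed for the returned circuit rather than inherited from $W$.
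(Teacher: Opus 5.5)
Your proof is correct and uses the same three mechanisms as the paper's sketch: the hard gate before beam entry, the baseline seeded into a persistent best-by-rank tracker whose first rank component is the aggregate, and the final componentwise re-check. Your case split is tighter than the paper's argument for (ii). The paper concludes ``hence the returned aggregate is at least the baseline's'' from the tracker alone. It never treats the branch in which the re-check replaces the winner by $C_0$, where the returned aggregate is $A_T(C_0)$ rather than the winner's. Your annihilation lemma is exactly what closes that branch. It is essentially Remark~3, which the paper states after the proof but does not invoke in it, and you prove it more completely by covering the regression-from-zero branch and the annihilator property. The lemma also makes your worry about whether $a^{*}$ is recomputed moot, since in Case~2 both $\mathrm{agg}(W)$ and $A_T(C_0)$ are $\ge 0 = a_b$. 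For (iii) you use only the re-check, whereas the paper also leans on the $\neg\,$regresses rank component. That component explains why the fallback rarely fires: the original has aggregate $\ge 0$ and no regression, so it outranks every regressing candidate, and a tracker that has seen it never holds one. Your argument, by contrast, does not depend on how the tracker is seeded. For (i), the paper relies on a different mechanism from yours: the final certificate is re-issued independently of the cached one, and any failure raises instead of returning. That is the justification you want for the substituted original. Your primary claim that the ZX path certifies $(C_0, C_0)$ is unproven: the paper itself notes the ZX path may be inconclusive on large non-Clifford circuits, and there is no numeric fallback above six qubits. Under re-issue-or-raise, an inconclusive verdict simply means nothing is returned, and the certificate that ships is a fresh one for the returned pair rather than whatever was checked at beam entry. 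One last small point: the lemma is not specific to v0.4, since under the archived semantics every zero-baseline objective already scored $0$; what matters is only that v0.4 kept the regression-from-zero branch at $0$.
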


\begin{proof}[Proof sketch]
(i) The hard gate is applied to every candidate before beam insertion; the
search can only ever contain or return certified circuits, and the final
certificate is re-issued independently of the cached one, a failure there
raising loudly rather than returning an uncertified circuit. (ii) The baseline
circuit is injected into the initial beam; the persistent best-by-rank tracker
contains it from step 0 and can only be replaced by candidates of strictly
higher rank, whose first component is the aggregate. Hence the returned
aggregate is at least the baseline's. (iii) The rank prefers non-regressing
candidates at equal aggregate, and the final re-check substitutes the original,
which trivially does not regress, whenever the winner regresses. The guarantee
does not follow from beam elitism alone: the beam is truncated by size and could
evict the origin, which is why best tracking is persistent and the re-check is
explicit. Note also that (iii) is stated against the original, not the baseline:
the baseline itself regresses some objective on 74/100 benchmark circuits, since
it trades objectives freely, so componentwise dominance over the baseline would
be vacuous.
\end{proof}

\begin{remark}[Regressing candidates are invisible to the aggregate]
If a candidate regresses any objective against the original, the clamp in
\eqref{eq:score} forces that score to 0, hence
$A_{T_G} = A_{T_P} = A_{T_L} = 0$: under all three norms a regressing candidate
is indistinguishable from the identity fallback. This is why, in the archived
(pre-v0.4) benchmark run, every one of the 42/100 componentwise-regressing agent
outputs sits at aggregate 0 (Section~\ref{sec:eval}); they were ties with the
origin resolved by the old tie-break. The v0.4 rank resolves those ties in
favour of non-regressing candidates without changing any aggregate, which is
what the standard-suite re-run confirms.
\end{remark}

\paragraph{Strong equivalence versus bottleneck.}
The hard gate enforces strong equivalence, unitary equality up to global phase.
The ZX path is the bottleneck in both coverage, since it may be inconclusive on
large non-Clifford circuits, and runtime (Section~\ref{sec:eval}); the tensor
fallback covers $\leq 6$ qubits. We accept the bottleneck deliberately: a
certifier that never produces false positives but is occasionally inconclusive
is the right bias for a proof-carrying system.

\section{Evaluation}\label{sec:eval}

We evaluate four claims: (1) the certification layer is sound in practice, with
no false positives, mutations caught, and all artifact sets validating; (2) the
agent satisfies its guarantees, never worse than the baseline and no
componentwise regression, and finds real improvements; (3) the choice of t-norm
is empirically irrelevant on the standard benchmark and provably relevant on an
adversarial suite; (4) certification caught two real defects, one in a widely
used upstream library and one in our own cross-platform reproducibility. All
artifacts, seeds, and scripts are released.

\subsection{Benchmark and gates}\label{sec:bench}

The benchmark comprises 100 circuits: 62 real circuits, namely 26 native and 25
transpiled variants from the QASMBench small category \cite{qasmbench} (commit
\texttt{357b9423}; transpilation via a Qiskit preset pass) and 11 unitary
circuits from the Feynman repository \cite{amy2018,feynman} (commit
\texttt{d2c382a2}), plus 38 curated generated circuits (seed 42): QFT on
3/5/6/7/8 qubits, Cuccaro and VBE adders, QAOA MaxCut at $p = 1..3$, UCC-style
VQE ans\"atze, Grover instances for $n = 3, 4, 5$, a functionally verified
$2\times2$ multiplier and a 3-bit floor-square-root, a 7-T Clifford+T Toffoli
decomposition and a Toffoli chain, and random symplectic Clifford+T circuits of
varying depths (20--80 gates, 3--8 qubits).

Selection criteria: 3--10 qubits, 3--200 gates, parseability by PyZX 0.10.4, a
stable QASM round-trip, and purely unitary circuits, which the certifier
requires. Because QASMBench circuits ship with terminal measurements, and some
with resets or classically controlled gates, a documented preprocessing step
strips them to their unitary kernel (\texttt{stripped\_nonunitary: true} in the
manifest; 51 circuits affected). Fixed global seed 42; every circuit is
optimized by the PyZX baseline and by the agent under each of the three t-norms,
for 400 optimizations in total. All artifacts (100 QASM files, manifest with
provenance, results, 400 certificates) are released with the repository.

Three gates guard the claims, mirroring the gated methodology of \cite{lggt}.

\paragraph{Gate G0 (metric ground truth): PASS, 10/10.}
Published per-circuit QASMBench/Feynman metrics were not locally available, so
an independent second code path, a hand-written QASM parser computing
\texttt{t\_count}, \texttt{two\_qubit\_count} and \texttt{depth} directly from
text without PyZX objects, was compared against AlchemQ's metric module on 10
real reference circuits. All 10 circuits $\times$ 3 metrics agreed to the unit.

\paragraph{Gate G1 (certifier completeness and soundness): PASS, 100/100 and
100/100.}
On 200 pairs built from 2--5-qubit base circuits (seed 424242): 100 equivalent
pairs, each with a known identity ($H^2$, $X^2$, $\mathrm{CNOT}^2$,
$TT^{\dagger}$, $SS^{\dagger}$, $P(\pi/8)P(-\pi/8)$) inserted at a random
position, were certified 100/100, and 100 mutated pairs (one extra $X$/$Z$/$T$
gate or a $T \to S$ swap) were rejected 100/100. Zero false positives, zero
false negatives on this class.

\paragraph{Gate G2 (agent non-inferiority on a dev subset): PASS, 10/10, with an
honesty caveat.}
On a random dev subset of 10 circuits (seed 7) the agent's G\"odel aggregate is
$\geq$ the baseline's on 10/10. We disclosed in v0.2 that on all 10 dev circuits
both aggregates are 0.0, so the gate passes by construction; recomputed under
the current zero-baseline score semantics, one dev circuit
(\texttt{error\_correctiond3\_n5}) shows a genuine nonzero advantage (0.755
versus 0.694) and the remaining nine remain $0 = 0$. The agent's actual
advantages appear on 9/100 circuits of the full set (Table~\ref{tab:wins}).

Total wall time for the full run was 539.5\,s on a standard workstation,
recorded in \pth{benchmark/full\_results.json}.

\subsection{Main results (archived run, recomputed)}

The archived benchmark run was produced by the v0.2 optimizer and executes 100
circuits $\times$ 4 methods $=$ 400 optimizations: \textbf{0 timeouts, 0 errors,
0 uncertified results}. We report its circuit-level outcomes and recompute all
aggregates from the archived metrics under the current score semantics
\eqref{eq:score}. The zero-baseline revision makes Clifford-heavy circuits
visible to the aggregate, which lifts the means but not the ordering
conclusions; Table~\ref{tab:agg} shows both. Aggregates of different norms are
not on a common scale: each column compares agent against baseline within one
norm.

Under the current semantics the agent is strictly better than the baseline on
9/100 circuits under $T_G$ and $T_P$, and on 2/100 under $T_L$ (the archived
values were 8/100 and 1/100; the new win is the Clifford circuit
\texttt{error\_correctiond3\_n5}, invisible before the revision). The agent is
never worse than the baseline under any norm (0/300 agent runs), confirming
Proposition~1(ii) empirically. The three norms return identical final circuits
on all 100 instances, with certificate ids coinciding for the three agents on
every circuit, so on this benchmark the t-norm choice has no observable effect;
Section~\ref{sec:ablation} shows where that breaks.

Over the 82 circuits with nonzero baseline T-count, the baseline alone reduces
the T-count by 21.4\% on average, a strong baseline. The certification methods
used for the 400 results: \texttt{zx-full-reduce} in 384 cases,
\texttt{numeric-tensor} in 16 (baseline-only split: 96 and 4).

\begin{table}[t]
\centering
\small
\caption{Mean aggregates over the 100 benchmark circuits, recomputed from
\pth{benchmark/full\_results.json} under the v0.4 score semantics
\eqref{eq:score}; archived (v0.2) semantics in parentheses. The agent is never
below the baseline on any circuit under either semantics.}
\label{tab:agg}
\begin{tabular}{@{}lll@{}}
\toprule
Method & Mean aggregate & Strict wins vs. baseline \\
\midrule
baseline (PyZX \texttt{full\_reduce})
  & $0.0490\,T_G$ / $0.0397\,T_P$ / $0.0346\,T_L$ & --- \\
  & \emph{archived:} $0.0137\,T_G$ / $0.0057\,T_P$ / $0.0013\,T_L$ & \\
agent-G\"odel        & 0.0584 (arch. 0.0225) & 9/100 (arch. 8/100) \\
agent-product        & 0.0450 (arch. 0.0103) & 9/100 (arch. 8/100) \\
agent-{\L}ukasiewicz & 0.0393 (arch. 0.0050) & 2/100 (arch. 1/100) \\
\bottomrule
\end{tabular}
\end{table}

\begin{table}[t]
\centering
\small
\caption{Top strict wins (agent-G\"odel vs. baseline), recomputed under the
current score semantics. Metrics are original $\rightarrow$ agent (baseline in
parentheses where it differs).}
\label{tab:wins}
\begin{tabular}{@{}lrrll@{}}
\toprule
Circuit & $A_{T_G}$ agent & $A_{T_G}$ base
        & T: orig $\rightarrow$ agent (base)
        & depth: orig $\rightarrow$ agent (base) \\
\midrule
\texttt{error\_correctiond3\_n5} & 0.755 & 0.694 & $0 \to 0$    & $77 \to 12$ (15) \\
\texttt{simon\_n6} (transpiled)  & 0.714 & 0.429 & $14 \to 0$ (0) & $33 \to 7$ (10) \\
\texttt{rand\_ct\_n3\_g40}       & 0.440 & 0.400 & $17 \to 5$ (5) & $25 \to 14$ (15) \\
\texttt{pea\_n5} (transpiled)    & 0.333 & 0.286 & $48 \to 15$ (15) & $87 \to 40$ (40) \\
\texttt{rand\_ct\_n6\_g50}       & 0.273 & 0.091 & $24 \to 10$ (10) & $20 \to 14$ (17) \\
\texttt{rand\_ct\_n4\_g60}       & 0.167 & 0.067 & $20 \to 8$ (8) & $30 \to 25$ (28) \\
\texttt{rand\_ct\_n5\_g70}       & 0.150 & 0.100 & $32 \to 8$ (8) & $30 \to 24$ (24) \\
\texttt{simon\_n6}               & 0.125 & 0.000 & $14 \to 0$ (0) & $8 \to 7$ (10) \\
\texttt{rand\_ct\_n7\_g60}       & 0.045 & 0.000 & $26 \to 6$ (6) & $27 \to 22$ (26) \\
\bottomrule
\end{tabular}
\end{table}

\paragraph{Non-regression then and now.}
The archived run predates the v0.4 mechanism of Proposition~1(iii): 42/100 agent
outputs of that run exhibit a componentwise regression against the original,
every one at aggregate 0 (Remark~3), that is, ties with the origin resolved in
favour of a regressing candidate by the old tie-break. Under the current
optimizer those ties resolve to non-regressing results with identical
aggregates; we validate this directly on the standard suite below.

\subsection{Standard suite after the v0.4 non-regression fix}

The 16-circuit standard suite (\pth{results/results.json}) was re-generated with
the v0.4 optimizer (benchmark run twice, identical outputs). Mean aggregates are
exactly unchanged versus the archived run: G\"odel, agent 0.046845 versus
baseline 0.038095; product, 0.014545 versus 0.010710; {\L}ukasiewicz, 0 versus
0. But the returned circuit changed on 9/16 instances
(Table~\ref{tab:nonregress}): all nine previously returned a regressing
tie-winner and now return the original circuit. In total 13/16 agent outputs
equal the original circuit, 3/16 are genuine improvements (the same three as
before the fix), and 0/16 regress any objective, so Proposition~1(iii) holds
empirically. All 48 agent certificates use \texttt{zx-full-reduce}, and the
three norms still return identical circuits on 16/16 instances.

\begin{table}[t]
\centering
\small
\caption{The nine standard-suite circuits whose returned circuit changed with
the v0.4 non-regression rank (metrics $=$ T / two-qubit / depth). ``pre-v0.4''
is the archived winner; ``v0.4+'' is the current winner, equal to the original
in all nine cases. Aggregates are unchanged (all 0).}
\label{tab:nonregress}
\begin{tabular}{@{}lccc@{}}
\toprule
Circuit & original & pre-v0.4 winner & v0.4+ winner \\
\midrule
\texttt{qft5}             & 20/10/9  & 18/33/52 & 20/10/9 \\
\texttt{qft6}             & 30/15/11 & 25/48/70 & 30/15/11 \\
\texttt{toffoli\_chain}   & 14/12/20 & 12/21/36 & 14/12/20 \\
\texttt{adder2}           & 28/33/50 & 16/46/58 & 28/33/50 \\
\texttt{rand\_ct\_n3\_g20} & 9/7/13   & 3/5/14   & 9/7/13 \\
\texttt{rand\_ct\_n4\_g20} & 9/3/9    & 3/3/10   & 9/3/9 \\
\texttt{rand\_ct\_n5\_g30} & 7/10/12  & 5/12/16  & 7/10/12 \\
\texttt{rand\_ct\_n5\_g50} & 30/9/23  & 6/12/15  & 30/9/23 \\
\texttt{rand\_ct\_n6\_g40} & 23/11/13 & 5/15/19  & 23/11/13 \\
\bottomrule
\end{tabular}
\end{table}

\subsection{Artifact validation and cross-platform reproducibility}
\label{sec:artifacts}

Three certificate corpora are reported here: the 400 legacy protocol-0.1.1
certificates from the benchmark run (\pth{benchmark/certs/}), the 400 historical
protocol-0.2.0 certificates migrated at v0.4 with \texttt{phase-note-v1} notes
(\pth{evidence/v04/certs/}), and 400 fresh protocol-0.2.0 certificates with
\texttt{phase-note-v2} notes (\pth{evidence/v05/certs/}).
Table~\ref{tab:validation} summarizes the validation gates, each executed on two
platforms (Linux, Python 3.12, NumPy 2.2.5, PyZX 0.10.4; macOS, Python 3.14.4,
NumPy 2.5.1, PyZX 0.10.4). The full 2998-test suite passes in approximately
80\,s on both platforms; tamper detection is 100\% across the 260-case V1
mutation matrix and the 27 V2 mutation cases.

\begin{table}[t]
\centering
\small
\caption{Artifact validation gates (all pass on both platforms). ``v1 boolean''
is \texttt{verify\_certificate}; ``v2 detailed'' is
\texttt{verify\_certificate\_v2\_detailed}. The legacy corpus is the archived
2026-06 benchmark run; the v0.4 corpus is migrated from it; the determinism
audit compares two independent generation runs.}
\label{tab:validation}
\begin{tabular}{@{}llll@{}}
\toprule
Artifact set & Protocol & Verifier & Result \\
\midrule
legacy benchmark corpus & 0.1.1 & v1 boolean & 400/400 \\
historical v0.4 corpus  & 0.2.0, \texttt{phase-note-v1} & v2 detailed & 400/400 certified \\
fresh v0.5 corpus       & 0.2.0, \texttt{phase-note-v2} & v2 detailed & 400/400 certified \\
determinism audit       & 0.2.0 & byte diff, excl. timestamp & 400/400 identical \\
\bottomrule
\end{tabular}
\end{table}

Seven legacy certificates deserve a remark: they do not reproduce under the
declared legacy engine, for known reasons. Six rendered a raw phase float that
wrapped around $2\pi$; one is the \texttt{variational\_n4} case study below. The
v1 verifier accepts them through a scoped cross-engine fallback, in which the
declared engine's verdict is corroborated by the other engine (ZX-declared and
tensor-corroborated, or conversely), and the V2 migration re-issues them with
canonical notes. All seven validate under both protocols.

\subsection{Case study 1: a false negative that was a PyZX bug}
\label{sec:case1}

The most instructive failure in the project's history is the certificate for
\pth{qasmbench\_\_variational\_n4\_\_transpiled} (baseline). Under the pre-v0.5
certifier its numeric fallback delegated the decision to PyZX's
\texttt{compare\_tensors}, which normalizes both tensors by the first element of
modulus $> 10^{-14}$ and then compares. For this 4-qubit pair the pivot element
has amplitude $4.7\times10^{-9}$, right at the cancellation floor and against
typical amplitudes ${\sim}\,1/16$, so the normalization amplifies
floating-point noise by ${\sim}\,2\times10^{8}$ and the normalized tensors
differ by up to $1.66\times10^{6}$. PyZX answered ``not equal''; the certificate
was a false negative, and a correct optimization looked unverifiable.

The optimal-overlap residual of the same pair is $1.67\times10^{-10}$ (fresh
v0.5 measurement: $7.4\times10^{-11}$, matching an independent dense-matrix
comparison at $\texttt{max\_error} = 7.4\times10^{-11}$), eleven orders of
magnitude below any reasonable threshold.

The fix, in module \texttt{equivalence.py} shared by both protocols, is the
overlap criterion of Section~\ref{sec:engine}: the phase is estimated from all
entries via the Hilbert--Schmidt inner product, never from a single pivot. The
pair now certifies under both protocols (\texttt{numeric-tensor}, residual
$7.4\times10^{-11}$), and the legacy certificate validates through the
cross-engine fallback above.

The certification layer rejected a correct optimization. Investigation traced
the cause to PyZX's pivot normalization rather than to the rewrite itself. This
is the behaviour a certification layer is built to produce.

\begin{figure}[t]
\centering
\begin{tikzpicture}[x=0.777cm, y=1cm]
% axis
\draw[->,thick] (-0.3,0) -- (14.3,0);
\foreach \e/\x in {-10/1,-8/3,-6/5,-4/7,-2/9,0/11,2/13,4/15,6/17}
  \draw (\x,0.08) -- (\x,-0.08) node[below,font=\scriptsize] {$10^{\e}$};
\node[below=4mm] at (9,-0.2) {\footnotesize normalized discrepancy};
% regions
\fill[black!8] (0,0.12) rectangle (5.3,1.15);
\fill[black!20] (5.3,0.12) rectangle (14.2,1.15);
\node[font=\scriptsize] at (2.6,0.95) {certified};
\node[font=\scriptsize] at (9.8,0.95) {rejected};
% threshold
\draw[very thick] (5.3,0.05) -- (5.3,1.6);
\node[font=\scriptsize,align=center,above] at (5.3,1.6)
  {V2 decision threshold\\ $2\times10^{-6}$};
% markers
\draw[thick] (0.68,0.12) -- (0.68,0.62);
\node[font=\scriptsize,align=center,above,rotate=0] at (0.68,0.62)
  {\texttt{variational\_n4}\\ $7.4\times10^{-11}$};
\draw[thick] (1.01,0.12) -- (1.01,0.30);
\node[font=\scriptsize,right] at (1.15,0.24)
  {equivalent Clifford+T pairs, ${\sim}\,2\times10^{-10}$};
\draw[thick] (6.22,0.12) -- (6.22,0.42);
\node[font=\scriptsize,right] at (6.35,0.42)
  {inequivalent pairs, $\gtrsim 10^{-3}$};
\draw[thick] (13.38,0.12) -- (13.38,0.62);
\node[font=\scriptsize,align=center,above] at (13.38,0.62)
  {same pair under pivot\\ normalization, $1.66\times10^{6}$};
\end{tikzpicture}
\caption{The decision scale of the numeric path, drawn from the values reported
in Sections~\ref{sec:engine} and \ref{sec:case1}. Equivalent and inequivalent
Clifford+T pairs sit about four orders of magnitude on either side of the
$2\times10^{-6}$ threshold. The \texttt{variational\_n4} pair, which PyZX's
pivot normalization placed at $1.66\times10^{6}$ and therefore rejected, has an
optimal-overlap residual of $7.4\times10^{-11}$: sixteen orders of magnitude
apart, on the same pair of circuits.}
\label{fig:scale}
\end{figure}

\subsection{Case study 2: eight certificates that failed only on macOS}
\label{sec:case2}

The v0.4 migrated corpus validated 400/400 on Linux (Python 3.12) but only
392/400 on macOS (Python 3.14). The eight rejections, namely
\texttt{bell\_n4} (baseline and three agents), \texttt{linearsolver\_n3} (three
agents), and the \texttt{variational\_n4} baseline, were all
\texttt{numeric-tensor} certificates whose \texttt{phase\_note} rendered raw
floats, for example the phase angle, with platform-dependent low-order digits.
The numbers come out of BLAS/LAPACK eigen- and dot-kernels whose summation order
differs across builds, and the byte-exact note comparison then failed even
though every mathematical claim reproduced perfectly.

The fix is the versioned phase-note machinery of Section~\ref{sec:v2}.
\texttt{phase-note-v1} notes are validated semantically (grammar, canonical
phase equality modulo $2\pi$, residual consistency within
\texttt{RESIDUAL\_EQ\_TOL}, rendering within half a quantum of the canonical
form), which absorbs exactly this platform noise; \texttt{phase-note-v2} notes
are rendered from the canonical phase representative and are byte-reproducible
across platforms by construction. Unknown schema versions yield
\texttt{inconclusive}. After the fix the historical corpus validates 400/400 on
both platforms, the fresh corpus is byte-deterministic across two independent
generations (400/400, timestamp excluded), and the compatibility suite includes
62 dedicated platform/tamper tests.

\subsection{T-norm ablation: does the norm choice matter?}\label{sec:ablation}

On the standard set, no. On the 100-circuit benchmark the three agent variants
returned the identical final circuit (same certificate hash) in 100/100 cases;
on the 16-circuit standard suite the same holds 16/16 after the v0.4
regeneration. This confirms, at community-benchmark scale, that multi-objective
aggregation, while implemented and axiomatically tested, does not differentiate
the search outcome on simple circuits.

\paragraph{Why: the dead-zone observability argument.}
Because scores are clamped in \eqref{eq:score}, a candidate that fails to
improve an objective with a positive baseline has a zero component there, hence
$A_T = 0$ under all three norms. Norm-divergence therefore requires a Pareto
front containing at least two points that improve all three objectives with
different profiles, which is empirically rare on shallow circuits, whose fronts
are often singletons or entirely zero-scored. This motivated an adversarial
suite.

\paragraph{The 38-circuit ablation suite.}
We assembled a 38-circuit suite (the 16-circuit v0.1 suite, 14 additional
structured circuits from QFT, Toffoli-cascade, ansatz and QAOA families, and 8
dense adversarial random Clifford+T circuits of 70--120 gates engineered to
produce multi-point all-improving fronts; seed 42) and ran the instrumented
agent once per norm. For each circuit we pool all certified intermediate
candidates, compute the Pareto front over (T, 2Q, depth), and define the winner
of norm $T$ as the front point maximizing its aggregate with a deterministic
tie-break (sum of scores, then lower T, 2Q, depth). A circuit is
\emph{norm-divergent} iff the three norms do not all pick the same winner. The
suite is untouched by the v0.4 RNG changes, since only the ansatz/QAOA/Toffoli
streams moved and not the dense family, and the ablation artifacts predate v0.4
but are unaffected by its score revision, since every instance has positive
baselines on all three objectives.

\paragraph{Results: norm-divergence is real but structurally rare.}
Table~\ref{tab:ablation} reports the pairwise outcome. 4/38 circuits are
norm-divergent, 3/38 are product-versus-G\"odel divergent, which is the
substantively interesting case since $T_L$ is often zero-informative, and all
divergent cases lie in the dense adversarial family
(\texttt{rand\_ct\_dense}, 8 circuits: 4 norm-divergent, 3 P-vs-G). Every
discordant pair is Pareto-incomparable: neither norm's final circuit dominates
the other's on (T, 2Q, depth), so the exact McNemar test has no dominance signal
and returns $p = 1.0$ for all three pairs. The norms disagree, but no norm wins
systematically.

\begin{table}[t]
\centering
\small
\caption{Pairwise ablation statistics on the 38-circuit suite (final circuits
per norm compared by QASM hash; direction $=$ which norm's circuit
Pareto-dominates; incomparable pairs are excluded from the binomial test). Exact
two-sided McNemar $p$-values \cite{mcnemar}.}
\label{tab:ablation}
\begin{tabular}{@{}lrrrrr@{}}
\toprule
Pair & Concordant & Discordant & Dominates A & Dominates B & Incomparable (exact $p$) \\
\midrule
$T_L$ vs $T_P$ & 37 & 1 & 0 & 0 & 1 ($p = 1.0$) \\
$T_L$ vs $T_G$ & 34 & 4 & 0 & 0 & 4 ($p = 1.0$) \\
$T_P$ vs $T_G$ & 35 & 3 & 0 & 0 & 3 ($p = 1.0$) \\
\bottomrule
\end{tabular}
\end{table}

\begin{table}[t]
\centering
\small
\caption{The four norm-divergent circuits (all in the dense adversarial family):
Pareto-front winners per norm as (T, 2Q, depth). The last row diverges only
through the $T_L$ tie-break.}
\label{tab:divergent}
\begin{tabular}{@{}lccc@{}}
\toprule
Circuit & Winner $T_L$ & Winner $T_P$ & Winner $T_G$ \\
\midrule
\pth{rand\_ct\_dense\_n3\_g80\_s42\_divp}  & (10, 11, 26) & (10, 11, 26) & (10, 12, 24) \\
\pth{rand\_ct\_dense\_n4\_g110\_s42\_divp} & (10, 25, 36) & (10, 25, 36) & (10, 24, 39) \\
\pth{rand\_ct\_dense\_n5\_g100\_s46\_divp} & (14, 27, 39) & (14, 27, 39) & (14, 26, 43) \\
\pth{rand\_ct\_dense\_n4\_g100\_s44\_divl} & (13, 22, 31) & (13, 21, 33) & (13, 21, 33) \\
\bottomrule
\end{tabular}
\end{table}

Table~\ref{tab:divergent} shows the mechanism concretely. On
\pth{rand\_ct\_dense\_n3\_g80\_s42\_divp} the front contains two all-improving points:
(10, 11, 26), with better two-qubit count and worse depth, and (10, 12, 24), the
reverse. $T_G$ (bottleneck, maximin) prefers the balanced profile (10, 12, 24)
with aggregate 0.478; $T_P$ prefers (10, 11, 26) with aggregate 0.177, because
its score product is higher there; the two winners are Pareto-incomparable. This
is precisely the predicted regime: norm choice matters exactly when the front
offers a genuine trade-off between objectives that all improved.

\begin{figure}[t]
\centering
\begin{tikzpicture}[x=3cm, y=1cm]
\draw[->,thick] (-0.15,0) -- (2.35,0)
  node[right,font=\scriptsize] {two-qubit count};
\draw[->,thick] (0,-0.3) -- (0,4.4)
  node[above,font=\scriptsize] {depth};
\foreach \v/\x in {11/0.5, 12/1.5}
  \draw (\x,0.06) -- (\x,-0.06) node[below,font=\scriptsize] {\v};
\foreach \v/\y in {24/1, 25/2, 26/3}
  \draw (0.02,\y) -- (-0.02,\y) node[left,font=\scriptsize] {\v};
% front
\draw[dashed,thick,black!50] (0.5,3) -- (1.5,1);
\fill (0.5,3) circle (2.6pt);
\fill (1.5,1) circle (2.6pt);
\node[font=\scriptsize,align=center,above right] at (0.55,3.05)
  {$(10,11,26)$\\ chosen by $T_L$, $T_P$\\ $A_{T_P} = 0.177$};
\node[font=\scriptsize,align=center,below left] at (1.45,0.92)
  {$(10,12,24)$\\ chosen by $T_G$\\ $A_{T_G} = 0.478$};
\end{tikzpicture}
\caption{The Pareto front of \pth{rand\_ct\_dense\_n3\_g80\_s42\_divp}, the
mechanism behind Table~\ref{tab:divergent}. Both points improve all three
objectives and both have T-count 10, so neither dominates the other. $T_G$
(maximin) takes the balanced profile; $T_P$ takes the one with the higher score
product. Norm choice matters exactly here, where the front offers a genuine
trade-off between objectives that all improved.}
\label{fig:pareto}
\end{figure}

\paragraph{{\L}ukasiewicz is zero-informative, and we can say why.}
On 36/38 circuits of the ablation suite the $T_L$ aggregate is 0.0 on every
front point. This is not a bug but the dead zone of the nilpotent norm: with
$n = 3$ objectives, $A_{T_L} = 0$ unless $s_1 + s_2 + s_3 > 2$, that is, unless
a candidate nearly eliminates two of three objectives at once. Such near-total
joint improvements essentially do not occur for ZX-based simplification of
Clifford+T circuits, where gains in T-count are typically paid for in depth or
two-qubit count. The norm remains correct, being axiom-tested, and registers
signal exactly where the semantics says it should: the two dense control
circuits where candidates do clear the threshold
(\pth{rand\_ct\_dense\_n3\_g100\_s42\_ctl} with $A_{T_L} = 0.101$ and
\pth{rand\_ct\_dense\_n4\_g80\_s42\_ctl} with $A_{T_L} = 0.081$). For learning,
however, a flat dead zone means zero gradient almost everywhere, a concrete
argument against $T_L$ as reward shaping for circuit optimization, and for
$T_P$/$T_G$ as the practical choices.

\paragraph{Runtime.}
Per-circuit wall time (optimization plus certification) averages 0.11\,s for the
baseline (max 1.14\,s) and approximately 2.7\,s per agent run (max 31\,s),
summed over the 100 circuits to per-method totals of 11.5\,s (baseline) and
274.1/272.3/271.2\,s (agents under the three norms); the full benchmark run took
539.5\,s wall time including mutation testing. Certification is dominated by the
ZX path and is cheap relative to search; its cost is paid once per artifact
rather than once per consumer.

\subsection{First hardware experiment: IBM Heron r2}\label{sec:qpu}

\paragraph{Setup.}
On 2026-07-24 we ran the certified \pth{qasmbench\_\_simon\_n6\_\_transpiled} pair
(the certificate of Listing~\ref{lst:cert}, re-verified immediately before
submission) on \texttt{ibm\_kingston}, a 156-qubit Heron~r2 processor: both
circuits, 1024 shots each, within a single job
(\texttt{d9hjh7t0k0jc738i9phg}) to minimize temporal and calibration drift
between the two executions. Both circuits were independently mapped to the
backend ISA using Qiskit's preset pass manager
(\pth{generate\_preset\_pass\_manager}) with
\texttt{optimization\_level=1}, the same fixed initial layout on physical qubits
0--5, trivial initial-layout selection, and SABRE routing. The backend target
and calibration data available at transpilation time were identical for both
arms. The comparison therefore measures the effect of AlchemQ preprocessing
followed by the same downstream Qiskit hardware-mapping configuration; it does
not isolate AlchemQ from all compiler-side transformations. Qiskit 2.5.1 and
\texttt{qiskit-ibm-runtime} 0.48.0 were used for submission and result
retrieval.

\paragraph{ISA cost.}
Compiled to native instructions, the certified circuit is dramatically cheaper:
depth $110 \to 24$ ($-78.2\%$), size $159 \to 54$, CZ gates $29 \to 10$
($-65.5\%$), SX $60 \to 18$, RZ $63 \to 20$. On Heron-class hardware, where
two-qubit gates dominate the error budget, this is the practically relevant
saving, larger than the logical-level metrics alone suggest.

\begin{figure}[t]
\centering
\begin{tikzpicture}[y=0.03125cm]
\draw[->,thick] (-0.4,0) -- (-0.4,180) node[above,font=\scriptsize] {count};
\foreach \v in {0,40,80,120,160}
  \draw (-0.45,\v) -- (-0.35,\v) node[left,font=\scriptsize] {\v};
\draw[black!25] (-0.4,0) -- (11.2,0);
% groups: name / original / optimized / x-offset
\foreach \nm/\o/\p/\x in {%
  depth/110/24/0.4, size/159/54/2.6, CZ/29/10/4.8, SX/60/18/7.0, RZ/63/20/9.2}{
  \fill[black!55] (\x,0)      rectangle ++(0.75,\o);
  \fill[black!15] (\x+0.85,0) rectangle ++(0.75,\p);
  \draw (\x,0)      rectangle ++(0.75,\o);
  \draw (\x+0.85,0) rectangle ++(0.75,\p);
  \node[font=\scriptsize,above] at (\x+0.375,\o) {\o};
  \node[font=\scriptsize,above] at (\x+1.225,\p) {\p};
  \node[font=\scriptsize,below] at (\x+0.8,-4) {\nm};
}
\fill[black!55] (0.4,168) rectangle ++(0.5,10);
\node[font=\scriptsize,right] at (1.0,173) {original};
\fill[black!15] (2.8,168) rectangle ++(0.5,10);
\draw (2.8,168) rectangle ++(0.5,10);
\node[font=\scriptsize,right] at (3.4,173) {certified-optimized};
\end{tikzpicture}
\caption{ISA-level cost of the two arms of the hardware experiment
(Section~\ref{sec:qpu}), compiled to the native gate set of
\texttt{ibm\_kingston} under an identical downstream Qiskit configuration.
Depth falls by 78.2\% and CZ count by 65.5\%. On Heron-class hardware, where
two-qubit gates dominate the error budget, the CZ column is the practically
relevant one.}
\label{fig:isa}
\end{figure}

\paragraph{Output quality.}
Both output distributions were compared against the exact ideal distribution
(statevector simulation of the original circuit): total-variation distance
0.1143 (original) versus 0.1035 (optimized); Hellinger fidelity 0.9069 versus
0.9142; and the fraction of outcomes satisfying the recovered hidden Simon
constraints ($y \cdot s = 0 \pmod 2$; support of sixteen states) 91.31\% versus
91.80\%. All three metrics favour the optimized circuit. The bootstrap 95\%
confidence intervals, however, include zero:
$\Delta_{\text{quality}} = +0.0107$ with CI $[-0.0215, +0.0459]$ for the TVD
difference, and $+0.0073$ with CI $[-0.0184, +0.0296]$ for fidelity. At 1024
shots the shot-noise floor of the TVD estimator alone is 0.0492 in the mean and
0.0645 at the 95th percentile (Monte-Carlo on the 16-state uniform ideal), so
the quality trend is consistent in direction but not statistically resolved.

\begin{figure}[t]
\centering
\begin{tikzpicture}[x=1cm, y=1cm]
\fill[black!8] (1.486,0.6) rectangle (8.514,3.9);
\draw[black!35,dashed] (9.607,0.6) -- (9.607,3.9);
\draw[black!35,dashed] (0.393,0.6) -- (0.393,3.9);
\node[font=\scriptsize,black!55,align=center,above] at (5.0,3.9) {shot-noise floor\\ mean $0.0492$, p95 $0.0645$};
\draw[very thick] (5.0,0.5) -- (5.0,4.0);
\draw[thick] (3.464,3.2) -- (8.279,3.2);
\draw[thick] (3.464,3.08) -- (3.464,3.3200000000000003);
\draw[thick] (8.279,3.08) -- (8.279,3.3200000000000003);
\fill (5.764,3.2) circle (2.4pt);
\node[font=\scriptsize,left] at (-0.25,3.2) {total-variation distance};
\draw[thick] (3.686,2.2) -- (7.114,2.2);
\draw[thick] (3.686,2.08) -- (3.686,2.3200000000000003);
\draw[thick] (7.114,2.08) -- (7.114,2.3200000000000003);
\fill (5.522,2.2) circle (2.4pt);
\node[font=\scriptsize,left] at (-0.25,2.2) {Hellinger fidelity};
\fill (5.35,1.2) circle (2.4pt);
\node[font=\scriptsize,left] at (-0.25,1.2) {Simon constraint rate};
\draw[->,thick] (-0.2,0.5) -- (10.4,0.5);
\draw (0.714,0.5) -- (0.714,0.42) node[below,font=\scriptsize] {$-0.06$};
\draw (2.143,0.5) -- (2.143,0.42) node[below,font=\scriptsize] {$-0.04$};
\draw (3.572,0.5) -- (3.572,0.42) node[below,font=\scriptsize] {$-0.02$};
\draw (5.0,0.5) -- (5.0,0.42) node[below,font=\scriptsize] {$0$};
\draw (6.429,0.5) -- (6.429,0.42) node[below,font=\scriptsize] {$+0.02$};
\draw (7.857,0.5) -- (7.857,0.42) node[below,font=\scriptsize] {$+0.04$};
\draw (9.286,0.5) -- (9.286,0.42) node[below,font=\scriptsize] {$+0.06$};
\node[font=\scriptsize,below=4mm] at (5,0.42)
  {difference in favour of the certified-optimized circuit};
\end{tikzpicture}
\caption{Output quality on \texttt{ibm\_kingston}, 1024 shots per arm, as
reported in Section~\ref{sec:qpu}. All three point estimates favour the
optimized circuit, and every bootstrap 95\% confidence interval includes zero.
The effect is roughly a fifth of the shot-noise floor of the TVD estimator at
this shot count, so the trend is consistent in direction but not statistically
resolved. The Simon constraint rate is shown as a point estimate; no bootstrap
interval is reported for it.}
\label{fig:quality}
\end{figure}

\paragraph{Reading.}
The experiment supports one claim today: the certificate's promise survives
contact with real hardware. A 78\% shallower circuit with 65\% fewer two-qubit
gates produces measured output with no detectable degradation at this shot
count, with a consistent but unproven trend toward better. Turning the trend
into a significance result requires an interleaved, multi-session protocol (at
least three calibration sessions, approximately $6\times10^{3}$ shots per arm per
session, bias-corrected TVD, calibration-snapshot provenance), which is ongoing work. We
report the insignificant pilot rather than wait for it.

\section{Discussion}\label{sec:discussion}

\paragraph{Why a proof-carrying optimizer.}
The design philosophy is adversarial toward the optimizer itself. Anything that
produces a circuit is untrusted, whether it is a beam search today, an RL policy
tomorrow, or an LLM-proposed rewrite the year after. The certificate is the only
interface and the verifier is the only trusted component. This inverts the
verified-compiler trade-off. Instead of proving the optimizer correct, which is
hard and tied to one optimizer, we make its outputs independently checkable,
which is easy per artifact and optimizer-agnostic.

The v0.5 history supports the stance twice. Certification surfaced a
normalization bug in PyZX's \texttt{compare\_tensors} that a test suite keyed on
PyZX alone could not have caught (Section~\ref{sec:case1}). Byte-level
certificate reproducibility surfaced a cross-platform BLAS dependence that
purely mathematical validation would have passed (Section~\ref{sec:case2}). In
both cases the audit surface was the artifact rather than the code.

\paragraph{What the agent is for.}
The toy agent exercises the contract under realistic load: 400 optimizations,
three aggregation norms, a strong baseline, mutation testing, and an adversarial
ablation suite. Two of its findings are useful beyond this system. The t-norm
irrelevance on standard benchmarks, sharpened by the non-regression rank, tells
future RL work that the aggregation objective is not the knob that matters on
these distributions. The regression-tie phenomenon tells future work that
\emph{which} certified circuit is returned at aggregate zero is a policy
question worth learning. The agent is simple by construction and all performance
claims are scoped accordingly.

\paragraph{T-norms as an acceptance family, not an objective.}
The three t-norms act as acceptance criteria sharing one structural property:
the annihilator at zero turns any componentwise regression into aggregate zero,
which is what makes the safety proofs go through. They are not a defuzzified
objective to be maximized blindly. The G\"odel (minimum) norm in particular is
known in the fuzzy literature to be Pareto-suboptimal as an aggregator, because
it ignores all non-minimal components. The empirical finding that all three
norms return identical circuits on the standard benchmark
(Section~\ref{sec:ablation}) shows that on these distributions the choice of
norm is not where the action is.

A principled selection layer is left to future work: a certified Pareto front
with per-point certificates, calibration-aware risk terms carrying explicit
uncertainty, and selection profiles (FTQC, NISQ, fidelity) built on top of the
unchanged equivalence contract, rather than a single scalar winner.

\paragraph{The certificate as a reproducibility object.}
Protocol 0.2.0 turns the certificate into a release artifact. Canonical hashes
make it tamper-evident. Versioned schemas make it reproducible across platforms
and dependency versions. \texttt{supersedes} chains give it a history. The
tri-state verdict makes its epistemic status explicit. The migration of 400
legacy certificates, which validate 400/400 after migration, together with 400
fresh certificates and a byte-level determinism audit, is evidence that the
format can evolve without breaking the chain of trust.

\section{Limitations}\label{sec:limits}

The current limitations are listed explicitly, roughly in order of severity for
adopters.

\begin{enumerate}
\item \textbf{No learned policy yet.} The optimizer is a fixed-seed beam search.
  RL integration remains future work.

\item \textbf{The verifier shares its implementation with the certifier.} Both
  use PyZX for parsing, ZX reduction, and tensor computation, so a PyZX bug
  could in principle affect generation and verification in the same direction.
  The v0.5 mitigation is semantic: independent phase estimation, canonical
  notes, and a cross-engine fallback for the seven legacy exceptions. It is not
  a second engine. Cross-validation against QCEC or the MPO checker remains
  future work.

\item \textbf{Unitary-only certification.} Non-unitary records
  (\texttt{measure}, \texttt{if}, \texttt{reset}) are out of scope and fail soft
  to \texttt{inconclusive} since v0.4.1. 51 such instances were stripped
  from the benchmark at selection time.

\item \textbf{Sharp numeric threshold for continuous gate sets.} The V2 decision
  is a residual $\leq 2\times10^{-6}$ cut. Genuinely equivalent pairs observe
  ${\sim}\,2\times10^{-10}$ and genuinely different Clifford+T pairs
  $\gtrsim 10^{-3}$, but circuits with continuous gate sets could in principle
  land in the grey zone between. The tolerance is recorded in the certificate so
  consumers can re-decide under their own policy.

\item \textbf{Canonical-hash phase granularity.} Float rotation angles are
  canonicalized to multiples of $2^{-20}\pi$
  (\texttt{Fraction.limit\_denominator($2^{20}$)}). Angles differing by less
  than that quantum share a canonical hash, by design, and angles near a
  rounding boundary can hash differently across platforms.

\item \textbf{Numeric fallback limited to $\leq 6$ qubits.} Larger circuits rely
  entirely on the ZX path; ZX-inconclusive large circuits get no certificate.

\item \textbf{Up-to-global-phase equivalence only,} with the
  \texttt{not\_tracked} disclosure for circuits declaring an input global phase,
  which PyZX 0.10.4 drops.

\item \textbf{No DoS hardening.} The verifier and certifier place no limits on
  input size, gate count, or tensor dimension beyond the six-qubit numeric gate.
  A hostile certificate could request an expensive ZX reduction. A policy layer
  imposing size limits and timeouts belongs above the verifier and is not
  implemented.

\item \textbf{Seven legacy certificates verify via cross-engine fallback.} They
  verify through corroboration rather than exact reproduction of the declared
  engine (Section~\ref{sec:artifacts}). Their V2 re-issues are canonical.

\item \textbf{Strong baseline, modest headroom.} The PyZX baseline already
  captures most of the available improvement, a 21.4\% mean T-count reduction on
  the 82 non-Clifford circuits. The agent's strict advantage is real but narrow:
  9 circuits out of 100.

\item \textbf{Norm-divergence is structurally rare} (4/38 adversarial, 0/100
  standard), so the practical cost of choosing $T_G$ over $T_P$ is low on these
  distributions.

\item \textbf{The aggregate masks regression magnitude.} The clamped score
  reports a regressed objective as zero regardless of how large the regression
  is. The returned circuit never regresses (Proposition~1(iii)), but
  intermediate candidates are disclosed only through \texttt{metrics\_after} in
  their certificates.

\item \textbf{Hardware validation is preliminary.} A single-session experiment
  on IBM Heron~r2 (Section~\ref{sec:qpu}) confirms the ISA cost reduction and
  detects no statistically significant quality difference at 1024 shots per arm.
  The point estimates favour the optimized circuit, but the confidence interval
  also permits a small degradation. The interleaved multi-session protocol
  required for a significance claim is ongoing work. Noise-aware selection,
  layout, and drift remain outside the certificate's scope and are first-class
  objectives for future hardware-aware work.
\end{enumerate}

\section{Future Work}\label{sec:future}

Three directions are concrete.

\emph{Learned policies.} An RL policy, for example a GNN/PPO agent in the style
of RL-ZX \cite{rlzx}, or an OpenEvolve-style evolutionary search
\cite{openevolve}, behind the same hard gate, with the certificate corpus as the
reward-audit trail. The non-regression rank defines a clean shaped-reward
baseline.

\emph{Second-engine validation.} Emit QCEC- or MPO-compatible witness formats,
or re-verify AlchemQ certificates with an independent engine, removing the
shared-implementation caveat.

\emph{Protocol hardening.} A DoS policy layer above the verifier,
protocol-0.2.0 certificates inside the search loop itself since the loop
currently uses the 0.1.1 engine, canonicalization for non-unitary records, a
wider benchmark including QUBIKOS-style known-optimal instances, and the
interleaved multi-session hardware protocol of Section~\ref{sec:qpu}.

\section{Conclusion}\label{sec:conclusion}

AlchemQ v0.5 shows that a small untrusted optimizer and a certificate layer can
be composed into a system whose every output is independently verifiable,
byte-reproducible across platforms, and guaranteed by construction to be
equivalent to its input and no worse than the original on any objective.

The benchmark evidence is reported without flattery. The baseline is strong. The
agent's edge is narrow. The choice of t-norm is mostly irrelevant on standard
distributions. The system's most valuable outputs so far are the two defects its
certificates caught, one in PyZX and one in our own cross-platform
reproducibility. Both are results that benchmarks alone would not have produced,
because a benchmark reports a score while a certificate reports a failure with a
location.

The certificate protocol (0.2.0), the reference implementation, the full
artifact corpora, and the release manifest are released for reuse and
adversarial scrutiny.

\section*{Reproducibility}

Code, benchmark data, all certificate corpora, validation scripts, and the
release manifest accompany this paper in the repository cited below. The
2998-test suite passes in approximately 80\,s on both supported platforms
(Linux, Python 3.12, NumPy 2.2.5, PyZX 0.10.4, Qiskit 2.5.2; macOS, Python
3.14.4, NumPy 2.5.1, PyZX 0.10.4). The reference Linux gate run was
independently re-executed in a clean container prior to release; the original
gate logs are published under \pth{evidence/final-local-test/} and the
re-executed logs under \pth{release/}. Because the optimizer is proprietary and
its two test modules are therefore not published, the test count that executes
against the public repository is smaller than 2998; both numbers are stated
separately in the release manifest.

The benchmark is fully seeded (global seed 42). The gates of
Section~\ref{sec:bench} re-run via \pth{benchmark/gate\_g0.py} and
\pth{benchmark/gate\_g1.py}, whose recorded outputs are published alongside them.
The artifact-validation gates of Table~\ref{tab:validation} re-run via the
scripts under \pth{evidence/v04/scripts/}, \pth{evidence/v05/scripts/} and
\pth{evidence/certifier-v2/scripts/}. Aggregate tables are recomputed from
\pth{benchmark/full\_results.json} and \pth{results/results.json} under the score
semantics of \pth{alchemq/scoring.py}. The figures of Section~\ref{sec:qpu} are
recomputed from the raw measurement counts in
\pth{evidence/qpu/qpu-validation-ibm\_kingston.json} by \pth{qpu\_analyze.py}.

\section*{Code and Data Availability}

The certificate format specification and a standalone reference verifier
(protocols 0.1.1 and 0.2.0) are released under the Apache License 2.0 at
\url{https://github.com/a1damek1125/AlchemQ}, archived at
\href{https://doi.org/10.5281/zenodo.XXXXXXXX}{doi:10.5281/zenodo.XXXXXXXX}.
The benchmark corpus with its provenance manifest, the certificate corpora, the
closeout reports for both case studies, the gate and validation scripts, the
environment records, and the raw artifacts of the hardware experiment are
released in the same repository under CC BY 4.0. Every path referenced in the
Reproducibility section resolves there.

The verifier's numerical path depends only on the Python standard library and
NumPy. It re-parses the embedded QASM, rebuilds both unitaries, and reproduces
every hash, phase, and tolerance check. Reproducing the \texttt{gate-canon-v1}
canonical hashes of protocol 0.2.0 additionally uses PyZX, as an optional
version-pinned dependency, to parse the embedded QASM into its \texttt{Circuit}
IR. The verifier reproduces every published certificate and rejects all tested
tampering.

The optimization engine is proprietary to TriStiX S.L. This does not restrict
verification of any claim made here. A proof-carrying certificate is checkable
from the artifact alone, without access to or trust in the optimizer that
emitted it, so the open verifier is the complete artifact required to audit
every result in this paper. Binaries of the engine are available from the author
on request for academic evaluation.

\section*{Acknowledgments}

\paragraph{Use of generative AI tools.}
A large language model was used in the preparation of this manuscript, to draft
and revise portions of the prose and to assist with \LaTeX{} formatting and
table layout. It was not used to design the system, to run or select
experiments, or to produce any numerical value reported in this paper.

The research contributions are the author's own work: the certificate protocols
0.1.1 and 0.2.0, the equivalence and phase-estimation semantics, the reference
certifier and verifier, the benchmark construction and gating methodology, the
two certification case studies of Section~\ref{sec:eval}, and the design,
execution and analysis of the hardware experiment on the IBM Heron~r2 processor
\texttt{ibm\_kingston} (job \texttt{d9hjh7t0k0jc738i9phg}).

Every quantitative claim in this paper is recomputable from the released
artifacts rather than taken on the author's word. The aggregate tables of
Section~\ref{sec:eval} are recomputed from \pth{benchmark/full\_results.json} and
\pth{results/results.json} under the score semantics of \pth{alchemq/scoring.py}.
The figures of Section~\ref{sec:qpu} are recomputed from the raw measurement
counts in \pth{evidence/qpu/qpu-validation-ibm\_kingston.json} by
\pth{qpu\_analyze.py}. Readers are encouraged to re-derive them.

All LLM-assisted text was reviewed and edited by the author, and every reference
was checked against its source. The author takes full intellectual
responsibility for the entire content of this paper, irrespective of how any
portion of the text was generated. No generative AI system is listed as an
author.

\end{document}